\DeclareFontFamily{U}{mathb}{\hyphenchar\font45}
\DeclareFontShape{U}{mathb}{m}{n}{ <-6> mathb5 <6-7> mathb6 <7-8>
  mathb7 <8-9> mathb8 <9-10> mathb9 <10-12> mathb10 <12-> mathb12 }{}
\DeclareSymbolFont{mathb}{U}{mathb}{m}{n}
\DeclareMathSymbol{\prec}{\mathrel}{mathb}{"A0}
\DeclareMathSymbol{\succ}{\mathrel}{mathb}{"A1}
\DeclareMathSymbol{\preceq}{\mathrel}{mathb}{"A8}
\DeclareMathSymbol{\succeq}{\mathrel}{mathb}{"A9}
\DeclareMathSymbol{\precneq}{\mathrel}{mathb}{"AC}
\DeclareMathSymbol{\succneq}{\mathrel}{mathb}{"AD}
\definecolor{gray}{gray}{0.4}
\newcommand{\removelatexerror}{\let\@latex@error\@gobble}
\theoremstyle{theorem}
\newtheorem{theorem}{Theorem}[section]
\newtheorem{lemma}[theorem]{Lemma}
\newtheorem{proposition}[theorem]{Proposition}
\newtheorem{corollary}[theorem]{Corollary}
\theoremstyle{definition}
\newtheorem{definition}[theorem]{Definition}
\newtheorem{problem}[theorem]{Problem}
\theoremstyle{remark}
\newtheorem{example}[theorem]{Example}  
\newtheorem{remark}[theorem]{Remark}
\newcommand{\lm}{\mathrm{lm}}
\newcommand{\degdiff}{\mathrm{degdiff}}
\newcommand{\degmin}{\mathrm{degmin}}
\newcommand{\QQ}{\mathbb{Q}}
\newcommand{\NN}{\mathbb{N}}
\def\Fr{\Sigma}
\def\MFr{M(\Fr)}
\newcommand{\KX}{K\langle X\rangle}
\def\Sig{^{[\Sigma]}}
\newcommand{\sig}{\mathfrak{s}}
\renewcommand{\epsilon}{\varepsilon}
\newcommand{\myspan}{\mathrm{span}}
\newcommand{\Syz}{\textnormal{Syz}}
\newcommand{\supp}{\textnormal{supp}}
\newcommand{\rank}{\textnormal{rank}}
\def\<#1>{\langle#1\rangle}
\newcommand{\eg}{e.g.}
\newcommand{\ie}{i.e.\xspace}
\newcommand{\NP}{\textup{\textsf{NP}}}
\begin{document}

\begin{frontmatter}

\title{Short proofs of ideal membership}

 \author[1]{Clemens Hofstadler}
\ead{clemens.hofstadler@mathematik.uni-kassel.de}

\author[2]{Thibaut Verron}
\ead{thibaut.verron@gmail.com}

\affiliation[1]{
	organization={Institute of Mathematics, University of Kassel}, 
	addressline={Heinrich-Plett-Stra\ss e 40},
	city={Kassel}, 
	country={Germany}
	}
\affiliation[2]{
	organization={Institute for Algebra, Johannes Kepler University},
	addressline={Altenberger Stra\ss e 69},
         city={Linz},
         country={Austria}
      	}

\begin{keyword}
	Noncommutative polynomials \sep 
	Signature Gr\"obner basis \sep
	Automated proofs \sep 
	Proof simplification \sep
	Linear programming
\end{keyword}

\begin{abstract}
A cofactor representation of an ideal element, that is, a representation in terms of the generators, can be considered as a certificate for ideal membership.
Such a representation is typically not unique, and some can be a lot more complicated than others.
In this work, we consider the problem of computing sparsest cofactor representations, \ie, representations with a minimal number of terms, of a given element in a polynomial ideal.
While we focus on the more general case of noncommutative polynomials, all results also apply to the commutative setting.

We show that the problem of computing cofactor representations with a bounded number of terms is decidable and $\NP$-complete.
Moreover, we provide a practical algorithm for computing sparse (not necessarily optimal) representations by translating the problem into a linear optimization problem
and by exploiting properties of signature-based Gr\"obner basis algorithms.
We show that, for a certain class of ideals, representations computed by this method are actually optimal, and we present experimental data illustrating that it can lead to noticeably sparser cofactor representations.
\end{abstract}

\end{frontmatter}

\section{Introduction}
\label{sec:introduction}

%

In polynomial algebra, the ideal membership problem is one of the most fundamental problems with many important applications from polynomial system solving over polynomial identity testing to automated reasoning.
In case of noncommutative polynomials in the free algebra, this last application is particularly relevant and has been the focus of recent research~\cite{HW94,SL20,CHRR20,raab2021formal,bernauer2023automatise,hofstadler-phd}.
Several frameworks have been developed that allow to prove the correctness of statements about linear operators (such as matrices, homomorphisms, bounded operators, etc.) by verifying ideal membership of noncommutative polynomials.
In this setting, any cofactor representation, that is, any representation of an ideal element as a linear combination of the generators, can be considered as a proof of the corresponding operator statement, and noncommutative Gr\"obner bases can be used to compute such representations~\cite{HRR19}, see also~\cite{letterplace} and references therein for available software.

In general, cofactor representations are not unique and different representations can differ drastically in their complexity. 
We could observe empirically that representations computed by Gr\"obner bases are often significantly longer than necessary.
In this work, we discuss the problem of finding \emph{sparsest} cofactor representations of an ideal element, that is,
representations with a minimal number of terms.
We focus on the situation of noncommutative polynomials, as we are particularly interested in computing short proofs of operator statements.
However, all techniques also apply analogously to commutative polynomials.

Although ideal membership in the free algebra is only semidecidable, we show that the problem of computing cofactor representations with the number of terms bounded by $N \in \NN$ is decidable, yet \NP-complete.
This yields a first (impractical) algorithm for computing sparsest cofactor representations (Algorithm~\ref{algo:decidability}).

We then describe how to obtain a practical algorithm for computing sparse (not necessarily sparsest) representations by making two simplifications:
\begin{enumerate}
	\item We restrict the search space to a finite dimensional subspace by only considering cofactor representations with terms smaller than a designated bound.\label{item:simplification-bound}
	\item We use the sum of the absolute values of the coefficients, i.e., the $\ell_1$-norm, of a representation as a complexity measure.\label{item:simplification-coeffs}
\end{enumerate}
With these simplifications, we translate the problem of finding sparse cofactor representations into solving a linear programming problem.
Our main result is Algorithm~\ref{algo:short-rep}, which computes, starting from any given cofactor representation, a minimal one w.r.t.\ the conditions~\eqref{item:simplification-bound} and~\eqref{item:simplification-coeffs}.

We also show that the second simplification does in fact impose no restriction for a class of ideals that appears frequently when translating operator statements.
In particular, we prove that, under certain assumptions satisfied by most examples studied in practice, Algorithm~\ref{algo:short-rep} computes a sparsest representation among all representations satisfying condition~\eqref{item:simplification-bound}.
Finally, we demonstrate the effectiveness of Algorithm~\ref{algo:short-rep} on several examples coming from actual operator statements.

Our algorithm relies on information provided by noncommutative signature-based Gr\"obner basis algorithms.
Initially developed in the commutative setting to improve Gr\"obner basis computations, signature-based algorithms have been subject of extensive research, see~\cite{EF17} and~\cite{Lairez-2022-AxiomsForTheory}.
Recently, the authors proposed a generalization of these algorithms to free algebras over fields~\cite{Hofstadler-2022-SignatureGroebnerBases} and rings~\cite{mixed-algebra}.

Signature-based algorithms compute, in addition to a Gröbner basis, some information on how the polynomials in that basis were computed.
This additional information allows the algorithms not only to predict and avoid redundant computations, but also to compute a Gr\"obner basis of the syzygy module of the generators.
In particular, signature-based algorithms compute this basis in a very structured way, and precisely this structure is what we exploit in our algorithm.


To the best of our knowledge, the problem at hand has not been studied in this form, that is, for noncommutative polynomials and from a practical point of view.
For commutative polynomials, the complexity of the ideal membership problem has been the focus of extensive research.
Starting from early work showing that the problem requires, in general, doubly exponential time~\cite{mayr-meyer-82,mayr89}, to recent results identifying special classes of polynomial ideals, often encompassing systems arising in applications in cryptography, physics or combinatorics, for which it can be solved more efficiently~\cite{Faugere-2012-group,Faugere-2013-gen-minrank,Faugere-2014-sparse,Faugere-2016-whomo-2,mayr2017,mastrolilli2021complexity,bulatov2022complexity,bulatov2022ideal}. 
The (commutative) ideal membership problem is also central to the field of algebraic proof complexity, where finding sparse cofactor representations corresponds to finding \emph{short Nullstellensatz proofs}, see~\cite{pitassi2016algebraic} for a recent survey.
In this setting, plenty of work is dedicated to establishing degree bounds of commutative cofactor representations in the form of effective versions of Hilbert's Nullstellensatz~\cite{Kol88,Jel05,d2013heights}.
It is worth noting that the research listed above primarily concentrates on theoretical aspects without considering the practical applicability of the developed methods and algorithms.
In contrast, in this work, we focus on actual computability of sparse representations for concrete examples.  
Proof simplification in automated theorem proving in general is also addressed in \eg,~\cite{veroff2001finding,kinyon2019proof}.


%

\section{Preliminaries}
\label{sec:non-comm-grobn}

We recall basic definitions regarding noncommutative polynomials and signature-based Gröbner basis algorithms in the free algebra.
For more details, we refer to \cite{Hofstadler-2022-SignatureGroebnerBases,mixed-algebra}.
For an introduction to noncommutative Gr\"obner bases without signatures, see~\cite{Xiu12,Mor16}.

\subsection{Free algebra}
\label{sec:free-algebra}
Throughout this paper, $X = \{x_{1},\dots,x_{k}\}$ is a finite set of indeterminates and $\langle X \rangle$ is the free monoid over $X$.
For $m \in \<X>$, we denote by $|m|$ the length of $m$.
Let $K$ be a field and $K\< X>$ be the free algebra over $X$.
We consider the elements in $K\<X>$ as \emph{noncommutative polynomials}, 
where monomials are given by words over $X$ and multiplication is given by concatenation of words. 
Note that $K\<X>$ is not Noetherian if $|X| > 1$.
For $F \subseteq K\<X>$, we denote by $(F)$ the (two-sided) ideal generated by $F$.

A \emph{monomial ordering} on $\<X>$ is a well-ordering $\leq$ compatible with the multiplication in $\<X>$, that is,
$m \leq m'$ implies $amb \leq am'b$ for all $a,b,m,m'\in\<X>$.
We fix a monomial ordering $\leq$.

\begin{example}
The degree lexicographic ordering $\leq_{\textup{deglex}}$ is a monomial ordering on $\<X>$ 
where two words $m,m' \in \<X>$ are first compared by their lengths and ties are broken by comparing the variables in $m$ and $m'$ from left to right using the lexicographic ordering $x_1 <_{\textup{lex}} \dots <_{\textup{lex}} x_k$.
\end{example}

For $f \in K\<X>$, the support of $f$, denoted by $\supp(f)$, is the set of all monomials appearing in $f$.
For $f \neq 0$, the leading monomial $\lm(f)$ of $f$ is the maximal element w.r.t.\ $\leq$ in $\supp(f)$ and
 the degree $\deg(f)$ of $f$ is $\deg(f)= \max_{m \in \supp(f)} |m|$.
Additionally, we set $\deg(0) = -1$.
 
\subsection{Free bimodule}
\label{sec:free-bimodule}
We fix a family of polynomials $f_{1},\dots,f_{r} \in \KX$, generating an ideal $I = (f_{1},\dots,f_{r})$.
We extend the previous definitions to the \emph{free $K\<X>$\nobreakdash-bimodule} $\Fr$ of rank $r \in \NN$, given by $\Fr = (K\<X> \otimes K\<X>)^r$.
The canonical basis of $\Fr$ is $\varepsilon_1,\dots,\varepsilon_r$,
where $\varepsilon_i = (0,\dots,0,1 \otimes 1,0,\dots,0)$ with $1 \otimes 1$ appearing in the $i$th position for $i=1,\dots,r$.

The set $\MFr$ of \emph{module monomials} in $\Fr$ is given by $\MFr = \{a\varepsilon_ib \mid a,b \in \<X>,  i =1,\dots, r\}$.
Every element $\alpha \in \Sigma$ has a unique representation of the form $\alpha = \sum_{i=1}^d c_i a_i \varepsilon_{j_i} b_i$ with nonzero $c_i \in K$ and pairwise different $a_i \varepsilon_{j_i} b_i \in \MFr$.
We denote its support by $\supp(\alpha) = \{a_i \varepsilon_{j_i} b_i\mid i = 1,\dots,d\}$ and associate to it the polynomial $\overline \alpha \coloneqq \sum_{i=1}^d c_i a_i f_{j_i} b_i \in I$.
With this, the (weighted) degree of $\alpha$ is $\deg(\alpha) = \max_i \deg(a_i f_{j_i} b_i)$.

A \emph{module ordering} on $M(\Sigma)$ is a well-ordering $\preceq$ compatible with scalar multiplication, that is, $\mu \preceq \mu'$ implies $a\mu b \preceq a \mu' b$ for all $\mu,\mu' \in M(\Sigma)$ and $a,b \in \<X>$.
We fix a module ordering $\preceq$.

\begin{example}
The degree-over-position-over-term ordering $\preceq_{\textup{DoPoT}}$ is a module ordering
where two module monomials $a \varepsilon_i b, a' \varepsilon_j b' \in \MFr$ are first compared by their degrees
and ties are broken by comparing the tuples $(i,a,b)$ and $(j,a',b')$ lexicographically using a monomial ordering for the monomial comparisons.
\end{example}

\begin{definition}
The \emph{signature} $\sig(\alpha)$ of a nonzero $\alpha \in \Fr$ is the maximal element w.r.t.\ $\preceq$ in $\supp(\alpha)$.
\end{definition}


%
%

\subsection{Signature-based algorithms}
\label{sec:sig-algos}

For the rest of this work, we assume that the monomial ordering $\leq$ and the module ordering $\preceq$ satisfy:
\begin{itemize}
	\item $\leq$ and $\preceq$ are \emph{compatible} in the sense that $a < b$ iff $a \varepsilon_i \prec b \varepsilon_i$ iff $\varepsilon_i a \prec \varepsilon_i b$
	for all $a,b \in \<X>$ and $i = 1,\dots,r$;
	\item $\preceq$ is \emph{fair}, meaning that the set $\{ \mu' \in \MFr \mid \mu' \prec \mu\}$ is finite for all $\mu \in \MFr$;
\end{itemize}
	
\begin{example}
The module ordering $\preceq_{\textup{DoPoT}}$ is fair and compatible with $\leq_{\textup{deglex}}$.
\end{example}

A \emph{labeled polynomial} $f^{[\alpha]}$ is a pair $(f,\alpha) \in I \times \Sigma$ with $f = \overline{\alpha}$. 
We call the set $I^{[\Sigma]} \coloneqq \{ f^{[\alpha]} \mid f \in I, \overline \alpha = f\}$ the \emph{labeled module} generated by $f_1,\dots,f_r$.
It forms a $K\<X>$\nobreakdash-subbimodule of $I \times \Sigma$ with component-wise addition and scalar multiplication.

A syzygy of $I^{[\Sigma]}$ is an element $\gamma \in \Sigma$ such that $\overline{\gamma}=0$.
The set of syzygies of $I^{[\Sigma]}$, denoted by $\Syz(I^{[\Sigma]})$, forms a $K\<X>$-subbimodule of $\Sigma$.
We recall the notion of Gr\"obner basis of $\Syz(I^{[\Sigma]})$.

\begin{definition}
A set $H \subseteq \Syz(I^{[\Sigma]})$ is a \emph{Gr\"obner basis of $\Syz(I^{[\Sigma]})$ (up to signature $\sigma \in \MFr$)} if 
for all nonzero $\gamma \in \Syz(I^{[\Sigma]})$ (with $\sig(\gamma) \prec \sigma$), 
there exist $d \in \NN$ and $\gamma_i \in H$, $c_i \in K$, $a_i,b_i \in \<X>$
such that $\gamma = \sum_{i=1}^d c_i a_i \gamma_i b_i$ and $\sig(a_i \gamma_i b_i) \preceq \sig(\gamma)$ for all $i = 1,\dots,d$.
\end{definition}

In general, a syzygy module need not have a finite Gr\"obner basis.
Nevertheless, Gröbner bases of $\Syz(I\Sig)$ can be enumerated by increasing signatures using signature-based algorithms.

\begin{theorem}\label{thm:syz-basis}
  There exists an algorithm to correctly enumerate a Gröbner basis of $\Syz(I\Sig)$ in increasing signature order.
  In particular, for all $\sigma \in \MFr$, stopping the algorithm at the first syzygy with signature $\succeq \sigma$ yields a finite Gröbner basis of $\Syz(I\Sig)$ up to signature $\sigma$.
\end{theorem}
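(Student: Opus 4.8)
The plan is to derive the statement from the signature-based Gr\"obner basis algorithm for the free algebra developed in \cite{Hofstadler-2022-SignatureGroebnerBases}, combined with the fairness of $\preceq$. Recall that such an algorithm incrementally builds a set $G$ of labeled polynomials and processes critical pairs (overlap ambiguities) in order of strictly increasing signature: a processed pair either contributes a new element to $G$ --- when its regular $\sig$-reduction yields a nonzero polynomial realising a leading monomial not yet attained at that signature --- or it $\sig$-reduces to $0$, in which case the underlying module element is a syzygy $\gamma$ with known signature $\sig(\gamma)$, and $\gamma$ is recorded in a set $H$. The algorithm claimed in the theorem is exactly this procedure, emitting the elements of $H$ as they are discovered. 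Since the signatures of processed pairs strictly increase and a syzygy inherits the signature of the pair producing it, the elements of $H$ are emitted in increasing signature order.

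First I would verify that this is a well-defined enumeration that reaches every signature. By fairness, the set $\{\mu \in \MFr \mid \mu \prec \sigma\}$ is finite for every $\sigma \in \MFr$; since the signature of every critical pair lies in $\MFr$, and compatibility of $\leq$ with $\preceq$ controls the signatures arising in all $\sig$-reduction steps, only finitely many critical pairs have signature $\prec \sigma$. Hence after finitely many iterations the algorithm has processed all pairs of signature $\prec \sigma$, and from then on every newly recorded syzygy has signature $\succeq \sigma$. At that moment $H$ consists precisely of the syzygies of signature $\prec \sigma$ found so far; this finite set is the claimed Gr\"obner basis of $\Syz(I\Sig)$ up to $\sigma$.

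It then remains to establish the Gr\"obner basis property, which I would do by a minimal-counterexample argument (stated for a fixed $\sigma$; taking $\sigma$ arbitrarily large gives the assertion about the full module). Suppose some nonzero $\gamma \in \Syz(I\Sig)$ with $\sig(\gamma) \prec \sigma$ admits no representation $\gamma = \sum_i c_i a_i \gamma_i b_i$ with $\gamma_i \in H$, $c_i \in K$, $a_i, b_i \in \<X>$, and $\sig(a_i \gamma_i b_i) \preceq \sig(\gamma)$; choose such $\gamma$ with $\sig(\gamma)$ minimal. By completeness of the signature Gr\"obner basis $G$ up to $\sigma$ --- the noncommutative signature analogue of Buchberger's criterion, established in \cite{Hofstadler-2022-SignatureGroebnerBases} --- the labeled element associated with $\gamma$ admits a $\sig$-reduction. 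If this step produces a nonzero term, subtracting the corresponding multiple of an element of $G$ (or of $H$) from $\gamma$ leaves a syzygy of strictly smaller signature plus a correction lying in the span permitted by the definition, contradicting minimality. Otherwise the element $\sig$-reduces to $0$, which means a critical pair of signature $\sig(\gamma) \prec \sigma$ was processed and produced a syzygy with leading signature $\sig(\gamma)$; this syzygy lies in $H$ and yields a representation of $\gamma$ of the required form, again a contradiction. Hence no such $\gamma$ exists.

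The main obstacle is making this last argument precise, as it rests on the full correctness theory of signature Gr\"obner bases in $\KX$: that processing all critical pairs of signature $\prec \sigma$ suffices to $\sig$-reduce to $0$ every module element of signature $\prec \sigma$, and that a $\sig$-reduction chain ending in $0$ constitutes a syzygy which the algorithm records with its (unchanged) signature. I would invoke these results from \cite{Hofstadler-2022-SignatureGroebnerBases,mixed-algebra} rather than reprove them, so that the remaining work here is only the bookkeeping linking ``all pairs up to $\sigma$ processed'' with ``Gr\"obner basis of $\Syz(I\Sig)$ up to $\sigma$'', together with the finiteness supplied by fairness.
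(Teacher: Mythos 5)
Your proposal is correct and takes essentially the same route as the paper: both reduce the theorem to the algorithm and correctness results of \cite{Hofstadler-2022-SignatureGroebnerBases} (the paper simply cites Algorithm~1, Theorem~42, and Lemma~43 there), and you explicitly acknowledge you would invoke rather than reprove those results. The extra exposition you give --- the finiteness argument via fairness, the inheritance of signature by a syzygy emitted from a pair that reduces to zero, and the minimal-counterexample sketch for the Gr\"obner basis property --- is a faithful unpacking of what those cited results deliver, not a different proof strategy.
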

\begin{proof}
  The algorithm is~\cite[Algo.~1]{Hofstadler-2022-SignatureGroebnerBases}.
  Its correctness is proved in~\cite[Thm.~42]{Hofstadler-2022-SignatureGroebnerBases}
  and the fact that the signatures increase throughout is proved in~\cite[Lem.~43]{Hofstadler-2022-SignatureGroebnerBases}.
\end{proof}

The algorithm uses signature-based Gröbner techniques, and also enumerates a (possibly infinite) \emph{labeled Gröbner basis} of $I\Sig$.
The definition and construction of labeled Gröbner bases are beyond the scope of this paper, but we note that, among their properties,
a labeled Gröbner basis is a set $G\Sig \subseteq I\Sig$ such that the set $\{f \mid f^{[\alpha]} \in G\Sig\}$ is a Gröbner basis of $I$.
In particular, given $f \in I$, performing polynomial reductions by $G\Sig$ to reduce $f$ to $0$ and adding the labeling of the reducers yields a cofactor representation of $f$ w.r.t. $f_1,\dots,f_{r}$.

\begin{remark}
The algorithm mentioned in the proof of Theorem~\ref{thm:syz-basis} is inefficient as it has to perform a lot of expensive module arithmetic.
A more efficient way of realizing Theorem~\ref{thm:syz-basis} is to combine~\cite[Algo.~2,4]{Hofstadler-2022-SignatureGroebnerBases} for computing \emph{signature Gröbner bases}.
These algorithms work with pairs $(f,\sig(\alpha))$ instead of labeled polynomials $f^{[\alpha]}$ and reconstruct the full module representations \emph{a posteriori}, avoiding a lot of module arithmetic in this way.
\end{remark}

A finite Gr\"obner basis of $\Syz(I^{[\Sigma]})$ up to signature $\sigma$, as output by the signature-based algorithm, provides an effective description of all syzygies with signature $\prec \sigma$.
This is the crucial property of signature-based algorithms that we exploit in Section~\ref{sec:finding-short-proofs} to compute sparse cofactor representations.

\section{Decidability and complexity}
\label{sec:decid-compl-short}

For the rest of this section, we fix a family of polynomials $f_{1},\dots,f_{r} \in K\<X>$ generating an ideal $I$.
Any cofactor representation of an ideal member $f \in I$ can be identified with an element $\alpha \in \Fr$ such that $\overline \alpha = f$.
The \emph{weight} of such a representation is given by the $\ell_0$-``norm'' $\lVert\alpha\rVert_0 \coloneqq |\supp(\alpha)|$. 
Then, with the set $R(f) \coloneqq \{ \alpha \in \Fr \mid \overline \alpha = f\}$ of \emph{(cofactor) representations} of $f$,
a \emph{sparsest} (cofactor) representation of $f$ corresponds to a minimal element w.r.t.\ $\lVert\cdot\rVert_0$ in $R(f)$.
We denote the set of all such minimal elements by $R_0(f)$.
If $f \notin I$, we set $R(f) = R_0(f) = \emptyset$.

\begin{remark}
Ideal membership in the free algebra is only semidecidable.
Consequently, we can also not decide whether $R(f) = \emptyset$ or not.
\end{remark}

\begin{remark}
The function $\lVert\cdot\rVert_0$ is not a norm as it is not homogeneous,
but one can associate to it a metric called \emph{Hamming distance}.
\end{remark}

In the following, we study the decidability and complexity of computing cofactor representations of bounded weight.
To this end, we assume that the coefficient field $K$ is computable, in the sense that the basic arithmetic operations as well as equality testing are effective. 
This means in particular that linear systems can be solved effectively using, for example, Gaussian elimination.  
With this, we consider the following problem.

\begin{problem}[Sparse cofactor representation]${}$\label{def:shortest_repr_problem}\\
  \textsc{Input}: $f, f_1,\dots,f_{r} \in K\langle X\rangle$, $N \in \NN$\\
  \textsc{Output}: a cofactor representation $\alpha \in R(f)$ with $\lVert \alpha \rVert_{0} \leq N$ if one exists, otherwise \textsf{False}.
\end{problem}

We show that Problem~\ref{def:shortest_repr_problem} is decidable, and we give an algorithm reducing it to Problem~\ref{def:sparse_linsolve} below of finding sparse solutions of linear systems, formally also known as the Min-RVLS (MINimum Relevant Variables in Linear System) problem.
This not only yields an algorithm for computing sparsest cofactor representations if $f \in I$, but it, in principle, also provides a semidecision procedure for ideal membership.
We focus on the first application here.

\begin{problem}[Sparse solution of linear system {[Min-RVLS]}]${}$\label{def:sparse_linsolve}\\
  \textsc{Input}: $A \in K^{m\times n}, \mathbf{b}\in K^{m}$, $N \in \{0,\dots,n\}$\\
  \textsc{Output}: a vector $\mathbf{y} \in K^{n}$ with $A\mathbf{y} = \mathbf{b}$ and $\lVert \mathbf{y} \rVert_{0} \leq N$ if one exists, otherwise \textsf{False}.
\end{problem}

Problem~\ref{def:sparse_linsolve} arises in many areas~\cite{chen2001atomic,candes2005decoding,donoho2006compressed}.
It is clearly decidable, for instance by looping over all $N$-subsets of $\{1,\dots,n\}$ for possible sets of nonzero coefficients of solutions.
Furthermore, in many cases, most notably for $K = \QQ$, it is known to be \NP-hard, with the corresponding decision problem being \NP\nobreakdash-complete~\cite[Problem MP5]{Garey-1990-ComputersIntractabilityGuide}.

In order to reduce the sparse cofactor representation problem to linear algebra, we need to constrain the solutions to a finite dimensional vector space, which requires to bound the degree of a solution.
The degree of elements in $R(f)$ can be arbitrarily large, but we can bound the degree of \emph{minimal} representations.
A cofactor representation $\alpha = \sum_{i=1}^{d} c_{i}a_{i}\varepsilon_{j_{i}}b_{i} \in R(f)$ is \emph{minimal} if no sub-sum is a syzygy, that is, $\sum_{i \in J} c_i a_i f_{j_i} b_i \neq 0$ for all non-empty subsets $J \subseteq \{1,\dots,d\}$.
To obtain the degree bound, we recall the notion of \emph{(polynomial) rewriting} introduced in~\cite[Def.~2]{raab2021formal}.

\begin{definition}
Let $f,g \in K\<X>$ and $a,b \in \<X>$ such that $\supp(f) \cap \supp(agb) \neq \emptyset$.
For every $c \in K$, we say that $f$ \emph{can be rewritten to $f + cagb \in K\<X>$ by $g$}.

Furthermore, we say that $f$ \emph{can be rewritten to $h$ by} $G \subseteq K\<X>$ if there are $h_0,\dots,h_d \in K\<X>$,
$h_d = f$, $h_0 = h$ and $g_1,\dots,g_d \in G$ such that $h_{k}$ can be rewritten to $h_{k-1}$ by $g_k$ for all $k = 1,\dots,d$.
\end{definition}

Rewriting can be considered as a weaker form of polynomial reduction, not requiring that a polynomial gets ``simplified'' by a rewriting step. 
Nevertheless, $f$ can be rewritten to zero by $\{f_1,\dots,f_r\}$ if and only if $f \in I$, see~\cite[Lem.~4]{raab2021formal}.
More importantly, we can show that any minimal representation of $f$ can be obtained by rewriting $f$ to $0$ by $\{f_1,\dots,f_r\}$ and logging the rewriting steps.

\begin{lemma}\label{lemma:rewriting-short-rep}
Let $f \in (f_1,\dots,f_r)$ and $\alpha = \sum_{i=1}^{d} c_{i}a_{i}\varepsilon_{j_{i}}b_{i} \in R(f)$ be a minimal representation of $f$.
Furthermore, for $k =0,\dots,d$, let $h_{k} = \sum_{i=1}^{k} c_{i}a_{i}f_{j_{i}}b_{i}$.
In particular, $h_{d}=f$ and $h_{0} = 0$.
Then, possibly after reordering the summands of $\alpha$, $h_k$ can be rewritten to $h_{k-1}$ by $\{f_1,\dots,f_r\}$ for all $k = 1,\dots,d$.
 \end{lemma}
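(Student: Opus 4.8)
The plan is to reduce the statement to a purely combinatorial fact about the supports of the partial sums $h_k$, and then to construct the required reordering greedily, fixing the summands from the last one backwards.

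First I would reformulate a single rewriting step in this concrete situation. The polynomial $h_k - h_{k-1} = c_k a_k f_{j_k} b_k$ is a scalar multiple of a single two-sided multiple of the generator $f_{j_k}$, so $h_k$ can be rewritten to $h_{k-1}$ \emph{in one step} by $f_{j_k} \in \{f_1,\dots,f_r\}$ precisely when $\supp(h_k) \cap \supp(a_k f_{j_k} b_k) \neq \emptyset$: the candidate step is $h_k \mapsto h_k + (-c_k)\,a_k f_{j_k} b_k = h_{k-1}$, which is admissible exactly under that support condition. Hence it suffices to order the summands of $\alpha$ so that $\supp(h_k)$ meets $\supp(a_k f_{j_k} b_k)$ for every $k = 1,\dots,d$.

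Next I would extract what minimality provides. If $d = 0$ then $f = h_0 = 0$ and there is nothing to prove, so assume $d \geq 1$. Minimality says exactly that $\sum_{i\in J} c_i a_i f_{j_i} b_i \neq 0$ for every non-empty $J \subseteq \{1,\dots,d\}$; in particular every single summand $c_i a_i f_{j_i} b_i$ is nonzero, so $f_{j_i}\neq 0$ and $\supp(a_i f_{j_i} b_i) \neq \emptyset$. The reordering is then built by downward induction on the position $k = d, d-1, \dots, 1$, keeping a set $J_k \subseteq \{1,\dots,d\}$ of not-yet-placed indices with $|J_k| = k$ and $h_k = \sum_{i\in J_k} c_i a_i f_{j_i} b_i$ once positions $1,\dots,k$ are filled by the elements of $J_k$; we start with $J_d = \{1,\dots,d\}$ and $h_d = \overline{\alpha} = f$. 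For $k \geq 1$ the sum $h_k$ is a non-empty sub-sum, hence nonzero by minimality, so I can pick some $m \in \supp(h_k)$; since the coefficient of $m$ in $\sum_{i\in J_k} c_i a_i f_{j_i} b_i$ is nonzero, some $i^{\ast}\in J_k$ has $m \in \supp(a_{i^{\ast}} f_{j_{i^{\ast}}} b_{i^{\ast}})$. I place $i^{\ast}$ in position $k$, set $J_{k-1} = J_k\setminus\{i^{\ast}\}$, and note $m \in \supp(h_k) \cap \supp(a_{i^{\ast}} f_{j_{i^{\ast}}} b_{i^{\ast}})$, which is the support condition for step $k$. Relabelling the summands of $\alpha$ by the positions assigned in this process yields $h_0 = 0$, $h_d = f$, and the support condition at every step, so each $h_k$ rewrites to $h_{k-1}$ by $\{f_1,\dots,f_r\}$, in fact in a single step.

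I do not expect a genuine obstacle here; the one point that needs care is that rewriting is directional — at step $k$ one needs the support of the \emph{larger} partial sum $h_k$ (not $h_{k-1}$) to meet the summand being removed — which is exactly why the construction proceeds from the top term downwards and uses minimality to ensure that every intermediate partial sum $h_k$ is nonzero, hence furnishes a monomial witnessing the next rewriting step.
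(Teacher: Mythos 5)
Your proposal is correct and takes essentially the same approach as the paper: both proceed from the last position $d$ downwards, use minimality to guarantee that each nonempty partial sum $h_k$ is nonzero, and hence that some remaining summand's support meets $\supp(h_k)$, which furnishes the next rewriting step. The paper phrases this as an induction on the weight $d$ (peel off one summand, note the truncated representation is again minimal, recurse), while you unroll the same argument into an explicit greedy construction tracking the set $J_k$ of unplaced indices — a presentational difference, not a substantive one.
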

 
\begin{proof}
  We perform induction on the weight $d$ of a minimal representation of $f$.
  For $d=0$ there is nothing to prove.
  Assume now that $d>0$ and that the result is proven for polynomials with a minimal representation of weight $d-1$.
  Because $\alpha$ is a minimal representation, $f$ cannot be $0$.
  Since $f = \overline \alpha = \sum_{i=1}^{d} c_{i}a_{i}f_{j_{i}}b_{i}$, the support of $f$ is contained in the union of the supports of the $a_{i}f_{j_{i}}b_{i}$, and there exists $1 \leq k \leq d$ such that $\supp(f) \cap \supp(a_{k}f_{j_{k}}b_{k}) \neq \emptyset$.
  Possibly after reordering the summands of $\alpha$, we can assume $k = d$.
  So $f$ can be rewritten to $h_{d-1} = f - c_{d}a_{d}f_{j_{d}}b_{d}$ using $f_{j_d} \in \{f_1,\dots,f_r\}$.
  Furthermore, $h_{d-1} = \sum_{i=1}^{d-1} c_{i}a_{i}f_{j_{i}}b_{i}$ has a minimal representation of weight $d-1$, because adding one term results in a minimal representation of weight $d$ for $f$.
  So, by induction hypothesis, this representation of $h_{d-1}$ is (up to reordering of the summands) a sequence of rewritings by $\{f_1,\dots,f_r\}$.
  \end{proof}

Another crucial property of rewriting is that we can bound the degree of the output in terms of the degree of the input and 
the \emph{degree difference} of the rewriter.
The \emph{degree difference} $\degdiff(g)$ of a nonzero $g \in K\<X>$ is $\degdiff(g) = \deg(g) - \degmin(g)$,  where $\degmin(g)= \min_{m \in \supp(g)} |m|$.

\begin{lemma}\label{lemma:rewriting-bound}
Let $f,g \in K\<X>$ and $c \in K$, $a,b \in \<X>$.
If $f$ can be rewritten to $h = f + cagb$ by $g$, then $\max\{ \deg(h),\deg(agb)\}  \leq \deg(f) + \degdiff(g)$. 
\end{lemma}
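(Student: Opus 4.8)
The plan is to unwind the definitions of rewriting and of $\degdiff$ and then argue by comparing supports. Recall that $f$ being rewritten to $h = f + cagb$ by $g$ means precisely that $\supp(f) \cap \supp(agb) \neq \emptyset$, so there is a monomial $m \in \supp(f)$ that also lies in $\supp(agb)$. The key observation is that every monomial in $\supp(agb)$ has the form $a m' b$ with $m' \in \supp(g)$, hence its length is $|a| + |m'| + |b|$. Applying this to the witnessing monomial $m = a m_0 b$ with $m_0 \in \supp(g)$ gives $|m| = |a| + |m_0| + |b| \geq |a| + \degmin(g) + |b|$, and since $m \in \supp(f)$ we get $\deg(f) \geq |a| + \degmin(g) + |b|$, i.e.\ $|a| + |b| \leq \deg(f) - \degmin(g)$.

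Next I would bound $\deg(agb)$: its largest monomial has length $|a| + \deg(g) + |b| \leq (\deg(f) - \degmin(g)) + \deg(g) = \deg(f) + \degdiff(g)$, which is the required bound for the $\deg(agb)$ term. For $\deg(h)$, note $\supp(h) \subseteq \supp(f) \cup \supp(agb)$, so $\deg(h) \leq \max\{\deg(f), \deg(agb)\}$; since $\degdiff(g) \geq 0$ we have $\deg(f) \leq \deg(f) + \degdiff(g)$, and we just bounded $\deg(agb)$ by the same quantity, so $\deg(h) \leq \deg(f) + \degdiff(g)$ as well. Taking the maximum of the two estimates finishes the proof.

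The argument is essentially bookkeeping with word lengths; there is no real obstacle. The one point that needs a moment of care is the edge case where $g = 0$ or $f = 0$: the hypothesis $\supp(f) \cap \supp(agb) \neq \emptyset$ already forces both $f \neq 0$ and $g \neq 0$ (otherwise the intersection is empty), so $\degdiff(g)$ is well-defined and the statement is vacuous in the degenerate situations. I would also remark that nothing here uses the monomial ordering — only the length function on $\langle X \rangle$ and the compatibility of concatenation with lengths — so the lemma holds for any choice of $\leq$.
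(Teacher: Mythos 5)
Your proof is correct and follows essentially the same route as the paper: you identify a common monomial $amb$ in $\supp(f)\cap\supp(agb)$, derive $|a|+|b|\leq \deg(f)-\degmin(g)$, bound $\deg(agb)$, and then handle $\deg(h)$ via $\supp(h)\subseteq\supp(f)\cup\supp(agb)$ together with $\degdiff(g)\geq 0$. The extra remarks on the degenerate cases and on independence from the monomial ordering are accurate but not needed.
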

\begin{proof}
  By definition, there exists a monomial $m$ in $\supp(g)$ such that $amb \in \supp(f)$, so $|amb| \leq \deg(f)$, or equivalently $|a|+|b| \leq \deg(f)-|m|$.
  Since $m \in \supp(g)$, $|m| \geq \degmin(g)$, and all in all, 
  \begin{align}
    \label{eq:2}
    &\deg(agb) = |a| + |b| + \deg(g) \\
    \leq &\deg(f) - \degmin(g) + \deg(g) = \deg(f) + \degdiff(g).
  \end{align}
  As $\deg(h) \leq \max\{ \deg(f),\deg(agb)\}$ and $\degdiff(g) \geq 0$, we conclude that also $\deg(h) \leq \deg(f) + \degdiff(g)$.
\end{proof}

Combining Lemma~\ref{lemma:rewriting-short-rep} and~\ref{lemma:rewriting-bound}, we obtain a bound on the degree of minimal cofactor representations of $f$ of bounded weight.
Since any sparsest cofactor representation is, in particular, minimal, this also yields a bound on the degree of sparsest representations.  

\begin{corollary}
  \label{prop:bound-shortest-repr}
Let $f \in (f_1,\dots,f_r)$, $N \in \NN$ and $\alpha \in R(f)$ be a minimal representation of $f$.
If $\lVert \alpha \rVert_0 \leq N$, then
$\deg(\alpha) \leq  \deg(f) + N \max_i \degdiff(f_i)$.
\end{corollary}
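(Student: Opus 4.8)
The plan is to combine Lemmas~\ref{lemma:rewriting-short-rep} and~\ref{lemma:rewriting-bound}, exactly as the preceding sentence advertises. Write $\alpha = \sum_{i=1}^{d} c_i a_i \varepsilon_{j_i} b_i$ with $d = \lVert\alpha\rVert_0$ and set $h_k = \sum_{i=1}^{k} c_i a_i f_{j_i} b_i$ for $k = 0,\dots,d$, so that $h_0 = 0$ and $h_d = \overline\alpha = f$. Since $\alpha$ is minimal, Lemma~\ref{lemma:rewriting-short-rep} lets me reorder the summands of $\alpha$ so that each $h_k$ can be rewritten to $h_{k-1}$ by $\{f_1,\dots,f_r\}$. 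I would additionally use the sharper fact read off from the proof of that lemma: after the reordering, the passage from $h_k$ to $h_{k-1}$ is a \emph{single} rewriting step, carried out with the generator $f_{j_k}$ and the left and right cofactors $a_k$ and $b_k$ — that is, with exactly the $k$-th summand of $\alpha$. This is the only point at which I rely on the internals of the earlier proof rather than on its statement.

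Given this, I would estimate the degrees of the intermediate polynomials by traversing the chain $h_d = f, h_{d-1}, \dots, h_0 = 0$ and applying Lemma~\ref{lemma:rewriting-bound} at each step. That lemma, applied to the step $h_k \to h_{k-1}$, bounds both $\deg(h_{k-1})$ and $\deg(a_k f_{j_k} b_k)$ by $\deg(h_k) + \degdiff(f_{j_k})$, and $\degdiff(f_{j_k}) \le \max_i \degdiff(f_i)$. Telescoping the first inequality from $k = d$ downwards gives $\deg(h_k) \le \deg(f) + (d-k)\max_i\degdiff(f_i)$, and feeding this into the second inequality yields $\deg(a_k f_{j_k} b_k) \le \deg(f) + (d-k+1)\max_i\degdiff(f_i)$ for every $k \in \{1,\dots,d\}$.

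Finally, since $d - k + 1 \le d$ for $k \ge 1$, since $\max_i\degdiff(f_i) \ge 0$, and since $d = \lVert\alpha\rVert_0 \le N$, each summand satisfies $\deg(a_k f_{j_k} b_k) \le \deg(f) + N\max_i\degdiff(f_i)$; taking the maximum over $k$ gives $\deg(\alpha) = \max_{1\le k\le d}\deg(a_k f_{j_k} b_k) \le \deg(f) + N\max_i\degdiff(f_i)$, and the degenerate case $d = 0$ (where $f = 0$) is immediate. I do not anticipate a genuine obstacle: the whole argument is a short telescoping estimate. The one point to be careful about is the remark above — that the rewriting chain supplied by Lemma~\ref{lemma:rewriting-short-rep} peels off precisely the terms $c_k a_k f_{j_k} b_k$ of $\alpha$ one at a time — because this is what makes Lemma~\ref{lemma:rewriting-bound} applicable to each $a_k f_{j_k} b_k$ directly, and it should be stated explicitly rather than left implicit.
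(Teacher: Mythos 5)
Your proof is correct and follows essentially the same route as the paper's: the paper's one-line claim that ``Lemma~\ref{lemma:rewriting-bound} shows inductively that $\deg(\alpha)\le\deg(f)+\lVert\alpha\rVert_0\max_i\degdiff(f_i)$'' is precisely the telescoping estimate you spell out. The paper also phrases the input from Lemma~\ref{lemma:rewriting-short-rep} as ``each $a_i f_{j_i} b_i$ is a rewriter in a rewriting sequence from $f$ to $0$,'' which is exactly the single-step reading you correctly identify as the point that must be extracted from that lemma's proof rather than from its bare statement.
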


\begin{proof}
Write $\alpha$ as $\alpha = \sum_{i=1}^d c_i a_i \varepsilon_{j_i} b_i$.
By definition, $\deg(\alpha) = \max_i \deg(a_i f_{j_i} b_i)$, and, according to Lemma~\ref{lemma:rewriting-short-rep}, each $a_i f_{j_i} b_i$ is a rewriter in a rewriting sequence from $f$ to $0$.
Thus, Lemma~\ref{lemma:rewriting-bound} shows inductively that $\deg(\alpha) \leq \deg(f) + \lVert \alpha \rVert_0 \max_i \degdiff(f_i)$ and the result follows since $\lVert \alpha \rVert_0 \leq N$. 
\end{proof}

As a consequence, we can state an algorithm for computing a cofactor representation of weight bounded by $N \in \NN$, reducing to the problem of finding a sparse solution of a linear system.

\begin{algorithm}
  \SetAlgoLined
  \KwIn{$f, f_1,\dots,f_{r} \in K\langle X\rangle$, $N \in \NN$}
  \KwOut{$\alpha \in R(f)$ with $\lVert \alpha \rVert_0 \leq N$ if one exists, otherwise \textsf{False}}
  $D \leftarrow \deg(f) + N \max_i \degdiff(f_i)$ \;
  $L \leftarrow \{a f_i b \mid a,b\in \<X>, \deg(af_i b) \leq D, i = 1,\dots,r\}$ \;
  \Return a $K$-linear combination of elements of $L$ equal to $f$ with $\leq N$ nonzero summands if one exists, otherwise \textsf{False}\;
  \caption{Sparse cofactor representation}
  \label{algo:decidability}
\end{algorithm}


\begin{corollary}
  Algorithm~\ref{algo:decidability} terminates and is correct.
\end{corollary}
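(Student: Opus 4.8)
The plan is to verify the two claims of the corollary — termination and correctness — separately, relying on the degree bound from Corollary~\ref{prop:bound-shortest-repr} and on the decidability of Problem~\ref{def:sparse_linsolve}.

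For \textbf{termination}, the key observation is that the set $L$ constructed in line~2 is finite. Indeed, since $\leq$ is a monomial ordering and $X$ is finite, there are only finitely many words $m \in \<X>$ with $|m| \leq D$, hence only finitely many products $a f_i b$ with $\deg(a f_i b) \leq D$: each such product has $|a| + |b| \leq D - \degmin(f_i) \leq D$, so $a$ and $b$ range over a finite set and $i$ over $\{1,\dots,r\}$. Computing $D$ in line~1 requires only finitely many degree computations. Finally, line~3 is an instance of Problem~\ref{def:sparse_linsolve}: writing the elements of $L$ as the columns of a matrix $A$ over $K$ (indexed by the finitely many monomials of degree $\leq D$) and $f$ as the target vector $\mathbf b$, a $K$-linear combination of elements of $L$ equal to $f$ with at most $N$ nonzero summands is exactly a solution $\mathbf y$ with $A\mathbf y = \mathbf b$ and $\lVert \mathbf y\rVert_0 \leq N$. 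Since $K$ is computable, this problem is decidable (e.g.\ by looping over all $N$-subsets of the columns and solving the resulting linear system by Gaussian elimination), so line~3 terminates.

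For \textbf{correctness}, I must show that the algorithm returns some $\alpha \in R(f)$ with $\lVert\alpha\rVert_0 \leq N$ whenever one exists, and \textsf{False} otherwise. One direction is immediate: any linear combination $f = \sum_{k} c_k a_k f_{i_k} b_k$ returned in line~3 with at most $N$ nonzero terms yields $\alpha = \sum_k c_k a_k \varepsilon_{i_k} b_k \in \Fr$ with $\overline\alpha = f$, i.e.\ $\alpha \in R(f)$, and $\lVert\alpha\rVert_0 \leq N$ (after collecting any repeated module monomials, the weight can only drop). For the converse, suppose some representation in $R(f)$ of weight $\leq N$ exists. Among all such representations choose one, $\alpha'$, of minimal weight $d' \leq N$; then $\alpha'$ is minimal in the sense defined before Lemma~\ref{lemma:rewriting-short-rep}, since a sub-sum that is a syzygy could be deleted to produce a strictly sparser element of $R(f)$. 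By Corollary~\ref{prop:bound-shortest-repr}, $\deg(\alpha') \leq \deg(f) + d' \max_i \degdiff(f_i) \leq \deg(f) + N \max_i \degdiff(f_i) = D$, so every summand $a'_k f_{j'_k} b'_k$ of $\overline{\alpha'}$ satisfies $\deg(a'_k f_{j'_k} b'_k) \leq D$ and hence lies in $L$. Therefore $\overline{\alpha'} = f$ expresses $f$ as a $K$-linear combination of at most $N$ elements of $L$, so line~3 finds some such combination and the algorithm returns a valid $\alpha$. If no representation of weight $\leq N$ exists, then in particular no $K$-linear combination of $\leq N$ elements of $L$ equals $f$, so line~3 returns \textsf{False}.

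The main subtlety — and the only place where real content enters — is the converse direction of correctness: it is \emph{a priori} conceivable that every sparse representation of $f$ has some summand of degree exceeding $D$, in which case restricting to $L$ would lose it. This is precisely what Corollary~\ref{prop:bound-shortest-repr} rules out, but its application requires first passing to a \emph{minimal} representation of the same (or smaller) weight, which is why the reduction-to-minimal step above is essential rather than cosmetic. Everything else is routine bookkeeping about finiteness of monomial sets and the translation between module elements and linear-algebra data.
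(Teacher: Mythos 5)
Your proof is correct and takes essentially the same route as the paper: reduce to Problem~\ref{def:sparse_linsolve} for termination, and for correctness pass to a minimal representation of minimal weight so that Corollary~\ref{prop:bound-shortest-repr} guarantees its summands all lie in $L$. You spell out the passage-to-minimal step and the translation into a linear system in more detail than the paper does, but the underlying argument is identical.
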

\begin{proof}
  The algorithm reduces the problem to that of finding sparse solutions of a linear system.
  This problem is decidable (recall that $K$ is computable), so the algorithm terminates.
  
  There exists a representation of $f$ of weight $\leq N$ if and only if there exists a \emph{minimal} representation $\alpha$ of $f$ of weight $\leq N$.
  By Corollary~\ref{prop:bound-shortest-repr}, this representation is given by a linear combination of weight $\lVert \alpha \rVert_{0} \leq N$ consisting of elements of $L$.
  So the algorithm is correct.
\end{proof}

It is also possible to describe a reduction of Problem~\ref{def:sparse_linsolve} to Problem~\ref{def:shortest_repr_problem}, which allows us to characterize the complexity of the problem of finding sparse representations in terms of the complexity of Problem~\ref{def:sparse_linsolve}.
For the practically most relevant case of $K = \QQ$ we arrive at the following theorem.

\begin{theorem}
  \label{prop:NP}
  The problem of, given $f, f_1,\dots,f_{r} \in \QQ\langle X \rangle$ 
  and $N \in \NN$ (in unary form), deciding whether there exists a cofactor representation of $f$ of weight at most $N$, is \NP-complete.
\end{theorem}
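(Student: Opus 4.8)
The plan is to establish membership in \NP\ and \NP-hardness separately, since the two directions rely on different parts of the preceding development.

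For membership in \NP, I would use Corollary~\ref{prop:bound-shortest-repr} to bound the size of a candidate witness. Given a ``yes'' instance, there exists a representation of weight $\leq N$, hence a \emph{minimal} one $\alpha$ of weight $\leq N$, and by Corollary~\ref{prop:bound-shortest-repr} its degree is at most $D \coloneqq \deg(f) + N \max_i \degdiff(f_i)$. The certificate is then the list of at most $N$ triples $(c_i, a_i b_i, j_i)$ with $a_i, b_i \in \<X>$ of the appropriate lengths; this has size polynomial in the input once $N$ is given in unary, because $D$ is linear in $N$ and each monomial of degree $\leq D$ has a description of size $O(D \log k)$. Verification amounts to expanding $\sum c_i a_i f_{j_i} b_i$ and checking equality with $f$ and checking that there are $\leq N$ terms, both of which are polynomial-time in the sizes of $f, f_1, \dots, f_r$ and the witness (arithmetic in $\QQ$ being polynomial-time).

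For \NP-hardness, I would exhibit a polynomial-time reduction from Min-RVLS over $\QQ$ (Problem~\ref{def:sparse_linsolve}), which is \NP-complete. The idea is to encode a linear system $A \mathbf{y} = \mathbf{b}$, $A \in \QQ^{m \times n}$, as an ideal membership instance in which the relevant cofactor representations are in bijection with the solutions of the system, with matching weight. A natural construction is to introduce enough commuting-free indeterminates to have $n$ distinct ``slots'': e.g.\ use words $u_1, \dots, u_n \in \<X>$ that are pairwise non-overlapping (no one is a factor of another, and no product $a u_i b$ ever coincides with $u_j$ for $j \ne i$ at the monomial level relevant to $f$), take $f_j$ to be a single monomial $u_j$ (or $u_j v$ for a fixed suffix $v$) scaled so that a cofactor of $f_j$ contributing to coordinate $i$ of $\mathbf{b}$ forces a single term with coefficient proportional to $A_{ij}$, and let $f$ be the polynomial whose $i$-th ``coordinate block'' has coefficient $b_i$. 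Then a cofactor representation $\sum_j y_j f_j$ (cofactors on left and right chosen to route term $j$ into the designed positions) reduces $f$ to $0$ iff $A \mathbf{y} = \mathbf{b}$, and its weight equals $\lVert \mathbf{y} \rVert_0$, because the non-overlap condition prevents any cancellation or merging of terms across different $f_j$. One must also argue that \emph{no} shorter or differently-shaped representation exists—i.e.\ that every representation of $f$ in this ideal is of the anticipated form up to the obvious symmetries—which is where the gadget design has to be careful; choosing the $u_j$ (for instance $u_j = x_1^{} x_2^{j} x_1^{}$ over $X = \{x_1, x_2\}$, or using fresh variables if $|X|$ may grow) so that the ideal $(f_1, \dots, f_r)$ is a monomial ideal with a transparent normal form makes this bookkeeping manageable. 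Finally, $N$ is copied verbatim, and since Min-RVLS is \NP-complete already with $N$ in unary, the reduction is polynomial.

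The main obstacle I expect is the hardness direction: it is easy to build a gadget where solutions of the linear system give representations of the right weight, but one must rule out ``cheating'' representations that exploit the noncommutative structure—spurious cancellations, cofactors that split a single $f_j$ into several terms, or contributions landing in unintended monomials—to guarantee that the \emph{minimum} weight of a representation equals $\mathrm{Min\text{-}RVLS}(A,\mathbf{b})$ exactly. Pinning down an encoding (monomial generators with pairwise incomparable, non-overlapping leading words, so that the relevant piece of $K\<X>$ behaves like a direct sum indexed by $j$) and proving the correspondence is the technical heart of the argument; the membership direction is essentially immediate from Corollary~\ref{prop:bound-shortest-repr}.
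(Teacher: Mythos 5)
Your membership-in-\NP\ argument is correct and is exactly the paper's: a ``yes'' instance has a minimal witness $\alpha$ whose degree is bounded by $\deg(f) + N\max_i\degdiff(f_i)$ via Corollary~\ref{prop:bound-shortest-repr}, which (with $N$ in unary) is polynomial size and verifiable by expanding $\overline\alpha$ and comparing with $f$. You also correctly identify Min-RVLS (Problem~\ref{def:sparse_linsolve}) as the source for the hardness reduction.

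The gap is in the hardness gadget. The construction you sketch --- pairwise non-overlapping words $u_j$, monomial generators $f_j = u_j$, and a ``scaling'' to route the entry $A_{ij}$ into a single term --- does not actually encode the matrix $A$: if each $f_j$ is a single scaled monomial, then the ideal $(f_1,\dots,f_r)$ is a monomial ideal and neither the cofactor representations of any $f$ nor their weights can depend on the numerical entries $A_{ij}$. You correctly sense that something is off (``the technical heart''), but the fix is not more gadgetry; it is to make the generators \emph{linear polynomials} that carry the columns of $A$. The paper's reduction does exactly this: introduce one fresh indeterminate $x_i$ per \emph{row} of $A$, set $f_j = \sum_i A_{i,j}\,x_i$ for each column $j$, and $f = \sum_i \mathbf{b}_i\,x_i$. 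Because $f$ and all $f_j$ are homogeneous of degree $1$, any summand $c\,a f_j b$ in a cofactor representation with $ab\neq 1$ is supported entirely in degree $\geq 2$ and therefore contributes nothing to $f$; in a minimal representation such terms are absent, so every minimal representation has the form $\sum_j y_j \varepsilon_j$ with $\overline{\sum_j y_j\varepsilon_j} = \sum_j y_j f_j = f$ iff $A\mathbf{y} = \mathbf{b}$, and its weight is exactly $\lVert \mathbf{y}\rVert_0$. Linearity is what rules out the ``cheating'' representations you were worried about, with no need for non-overlapping word constraints or a normal-form analysis of a monomial ideal. Your parenthetical that Min-RVLS is \NP-complete ``already with $N$ in unary'' is also slightly beside the point: for Min-RVLS the parameter satisfies $N\leq n$ and the matrix already has size $\geq n$, so the encoding of $N$ is immaterial there; the unary requirement matters only on the \emph{target} side of the reduction, which the paper's remark following the theorem explains.
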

\begin{proof}
  Over $\QQ$, the decision problem associated to Problem~\ref{def:sparse_linsolve} is \NP-complete~\cite[Problem MP5]{Garey-1990-ComputersIntractabilityGuide}.
  Given an input $A,\mathbf{b},N$ to that problem, introduce one variable $x_{i}$ for each row of $A$, interpret each column of $A$ as the polynomial $f_j = \sum_i A_{i,j} x_i$, and the right-hand side as the polynomial  $f = \sum_i \mathbf{b}_i x_i$.
  There is a one-to-one correspondence between solutions with $N$ nonzero entries to the linear system and cofactor representations of $f$ with weight $N$.
  So the problem of finding a representation of weight at most $N$ is also \NP-hard.

  Furthermore, if there exists a representation of weight $\leq N$, then there exists one with degree $\leq \deg(f) + N \max_i \degdiff(f_i)$, which makes it polynomial size in $N$ and the size of the input polynomials.
  The validity of that representation can be verified in polynomial time.
  So the problem is \NP, and therefore \NP-complete.
\end{proof}

\begin{remark}
  The requirement that $N$ be given in unary format is necessary in order to get a bound in terms of the size of the input: unlike in Problem~\ref{def:sparse_linsolve}, the input of Problem~\ref{def:shortest_repr_problem} is not at least of size $N$.
  If $N$ is given in binary format, the size of the input is $\log(N) + {}$the length of a representation of the polynomials, and the latter is independent of $N$. In that case, the decision problem is still \NP-hard but no longer \NP, because the degree bound is not polynomial in $\log(N)$.
\end{remark}

Since Problem~\ref{def:sparse_linsolve} is \NP-complete, Theorem~\ref{prop:NP} implies that there exists a polynomial-time reduction of Problem~\ref{def:shortest_repr_problem} to Problem~\ref{def:sparse_linsolve}.
However, Algorithm~\ref{algo:decidability} is not such a reduction, as it reduces to an instance of Problem~\ref{def:sparse_linsolve} with size $|L| = O(k^{N})$, where $k$ is the number of variables.
For Problem~\ref{def:sparse_linsolve}, the complexity of a brute-force solver looping over all $N$-subsets of $\{1,\dots,n\}$ is $O(n^{N})$.
All in all, the runtime complexity of Algorithm~\ref{algo:decidability}, using such an algorithm for the last step, is $O(k^{N^{2}})$.



In practice, the last step of Algorithm~\ref{algo:decidability} is infeasible for non-trivial examples.
To illustrate this point, we consider the following simple statement about the Moore-Penrose inverse.

\begin{theorem}[{\cite[Ch.~5.7 Fact 11]{hogben2013handbook}}]\label{thm:MP}
Let $A$ be an invertible matrix with inverse $B$.
Then $B$ is the Moore-Penrose inverse of $A$.
\end{theorem}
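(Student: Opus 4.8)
The plan is to translate the statement about the Moore--Penrose inverse into an ideal membership problem in a free algebra, which is the standard encoding used in the operator-statement literature. First I would introduce noncommutative indeterminates $a$ and $b$ standing for the operators $A$ and $B$, together with the hypothesis polynomials encoding ``$B$ is the two-sided inverse of $A$'', namely $f_1 = ab - 1$ and $f_2 = ba - 1$. The Moore--Penrose conditions for $B$ being the Moore--Penrose inverse of $A$ are the four equations $ABA = A$, $BAB = B$, $(AB)^{*} = AB$, $(BA)^{*} = BA$; to handle the adjoints, I would also add indeterminates $a^{*}, b^{*}$ and the corresponding ``adjoint'' relations (so that $*$ is an anti-automorphism), most importantly $f_3 = a^{*}b^{*} - 1$ and $f_4 = b^{*}a^{*} - 1$ obtained by applying $*$ to $f_1, f_2$. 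The claim then reduces to showing that the four polynomials $aba - a$, $bab - b$, $ab - (ab)^{*} = ab - b^{*}a^{*}$, and $ba - (ba)^{*} = ba - a^{*}b^{*}$ all lie in the ideal generated by $f_1, f_2, f_3, f_4$.

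The key steps, in order, are: (1) set up the indeterminates and the generator list as above; (2) exhibit explicit cofactor representations for each of the four target polynomials. For instance $aba - a = a\cdot(ba-1) = a f_2$ and symmetrically $bab - b = b f_1$ (or $f_2 b$, etc.), which are immediate. For the symmetry conditions, $ab - b^{*}a^{*} = (ab - 1) - (b^{*}a^{*} - 1) = f_1 - f_3$ and likewise $ba - a^{*}b^{*} = f_2 - f_4$, since both $ab$ and $b^{*}a^{*}$ equal $1$ modulo the hypotheses. (3) Conclude by invoking the equivalence between such cofactor representations and ideal membership, hence a valid proof of the operator statement in the sense recalled in the introduction. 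In particular all four certificates are extremely short, which is exactly the point the authors want to make: this is a toy instance on which Algorithm~\ref{algo:short-rep} should recover a sparsest representation.

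The main obstacle is not mathematical depth but rather pinning down the correct formalization: one must be careful that the noncommutative encoding faithfully captures the operator statement, in particular that adjunction is modelled as an involutive anti-automorphism and that the hypothesis ``$A$ invertible with inverse $B$'' is encoded by \emph{both} $ab-1$ and $ba-1$ (in a noncommutative ring a one-sided inverse need not be two-sided, so both are genuinely needed), and that applying $*$ to the hypotheses to obtain $f_3,f_4$ is legitimate. Once the setup is fixed, the verification is a routine check that the displayed linear combinations are correct, so I would present the four cofactor representations explicitly and leave the (trivial) expansion to the reader.
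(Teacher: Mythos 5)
Your proof is correct but takes a genuinely different route from the paper's. The paper introduces a fresh symbol $a^\dagger$ for the Moore--Penrose inverse of $A$ (which exists because $A$ is a matrix), adds the four Moore--Penrose axioms for $a^\dagger$ to the hypothesis ideal alongside $ab-1$ and $ba-1$, and proves the \emph{single} ideal membership $b - a^\dagger \in (F)$; the operator-level argument is the one-liner $B = BAB = BAA^\dagger AB = BAA^\dagger = A^\dagger$, using only $AA^\dagger A = A$ and the two-sided invertibility. You instead verify directly that $B$ itself satisfies all four Moore--Penrose axioms, which splits into four separate memberships in the smaller ideal generated by $ab-1$, $ba-1$ and their adjoints. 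Both are valid proofs of Theorem~\ref{thm:MP}, and your route is arguably a little more general since it does not presuppose existence or uniqueness of $A^\dagger$ and works verbatim in any ring with involution. On the other hand, it is not the encoding the paper needs: the whole point of Example~\ref{ex:MP} is to package the statement as one ideal membership with the nontrivial 4-term certificate~\eqref{eq:cofactor-repr}, on which Algorithm~\ref{algo:decidability} is infeasible and Algorithm~\ref{algo:short-rep} can certify optimality; your four certificates, each with one or two terms, would not exercise that machinery at all. Finally, a small bookkeeping slip: with your labels $f_3 = a^*b^*-1$ and $f_4 = b^*a^*-1$ (note $f_1^* = b^*a^*-1 = f_4$ and $f_2^* = a^*b^*-1 = f_3$), the symmetry conditions factor as $ab - b^*a^* = f_1 - f_4$ and $ba - a^*b^* = f_2 - f_3$, not $f_1 - f_3$ and $f_2 - f_4$ as written.
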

\begin{proof}
Let $A^{\dagger}$ be the Moore-Penrose inverse of $A$.
So in particular, $AA^{\dagger}A=A$, and hence $B = BAB = BAA^\dagger AB = BAA^\dagger = A^\dagger$.
\end{proof}

\begin{example}
\label{ex:MP}
Theorem~\ref{thm:MP} can be encoded in terms of the ideal membership $b - a^\dagger \in (F)$
with 
\begin{align*}
	F = \{ab-1, ba-1, a a^\dagger a - a, a^\dagger a a^\dagger - a^\dagger, (a^\dagger)^* a^* - a a^\dagger, a^* (a^\dagger)^* - a^\dagger a\}
\end{align*}
in the algebra $\QQ\langle a,a^{*},a^{\dagger},(a^{\dagger})^{*},b \rangle$.

The proof given above is then equivalent to the following cofactor representation of $b - a^\dagger$ certifying the ideal membership:
\begin{equation}\label{eq:cofactor-repr}
	b - a^\dagger = a^\dagger(ab - 1) - b(ab-1) - b(aa^\dagger a -a )b + (ba - 1)a^\dagger ab.
\end{equation}
This cofactor representation consists of $4$ terms.
To see if there exists a representation with $\leq 3$ terms, we can call Algorithm~\ref{algo:decidability} with $N = 3$.
The set $L$ contains polynomials of degree at most $D=7$, and it consists of \num{88672} elements.
This is too large to test all 3-subsets exhaustively.

Using the techniques of Section~\ref{sec:finding-short-proofs}, we will see that a much smaller set of elements is sufficient and by applying the results of
Section~\ref{sec:special-case}, we will be able to verify that \eqref{eq:cofactor-repr} is in fact a sparsest representation of $b - a^\dagger$.
This shows that the proof given above is a shortest proof of the theorem.
\end{example}


\section{Computing sparse representations}
\label{sec:finding-short-proofs}

For the rest of this work, we restrict ourselves to the case $K = \QQ$.
We have seen in the previous section that computing sparsest cofactor representations is equivalent to the \NP-hard problem of finding sparsest solutions of a linear system.
Several methods have been proposed to obtain approximate solutions of the latter~\cite{coifman1992entropy,mallat1993matching,chen2001atomic} by using other measures as proxies for the sparsity of a solution and by minimizing over them.
One of these methods, called \emph{Basis Pursuit}~\cite{chen2001atomic}, uses the $\ell_1$-norm as an approximation for the sparsity of a solution.

In the following, we follow the Basis Pursuit approach and use the $\ell_1$-norm $\lVert\alpha\rVert_1 \coloneqq \sum_{i=1}^d |c_i|$ of $\alpha = \sum_{i=1}^d c_i a_i \varepsilon_{j_i} b_i$ as a surrogate complexity measure of a cofactor representation.
The advantage of this approach is that an $\ell_1$\nobreakdash-minimal solution of a linear system over $\QQ$ can be found efficiently using linear programming.
Additionally, we use the effective description of the syzygy module provided by signature-based algorithms to reduce the size of the linear system that we have to consider.

Based on Corollary~\ref{prop:bound-shortest-repr}, it suffices to consider only cofactor representations up to a degree bound when computing minimal representations.
Here, we, more generally, restrict to representations with signature less than a designated bound $\sigma \in M(\Sigma)$.
Since the module ordering is assumed to be fair, this ensures that we work in a finite dimensional vector space.
If the module ordering is also compatible with the degree, \ie, if $\deg(\alpha) \leq \deg(\beta)$ implies $\alpha \preceq \beta$, this includes all cofactor representations of degree $< \deg(\sigma)$.

So, formally, we seek a minimal element w.r.t.\ $\lVert\cdot\rVert_1$ in the set 
\[
	R(f,\sigma) \coloneqq \left\{\alpha \in R(f) \mid \sig(\alpha) \prec \sigma\right\}
\]
of cofactor representations of $f$ \emph{up to signature $\sigma \in M(\Sigma)$}.
We denote the set of all such $\ell_1$-minimal elements by $R_1(f,\sigma)$. 
Analogously, we let $R_0(f,\sigma)$ be the set of all minimal elements w.r.t.\ $\lVert\cdot\rVert_0$ in $R(f,\sigma)$.

The results in this section rely on the fact that we have some $\alpha \in R(f,\sigma)$.
However, in general, for $\sigma$ too small, the set $R(f,\sigma)$ can be empty, even if $R(f) \neq \emptyset$.
To resolve this issue, we assume that we have a cofactor representation $\alpha \in R(f)$ and that $\sigma$ is chosen so that $\sigma \succ \sig(\alpha)$.
Note that this assumption, in particular, implies that $f \in (f_1,\dots,f_r)$.
Such $\alpha$ can be obtained, for example, by reducing $f$ to zero using a (partial) labeled Gr\"obner basis and keeping track of the reductions.
With this in mind, we assume that $R(f,\sigma) \neq \emptyset$.
  
In the following, we describe Algorithm~\ref{algo:short-rep}, which allows to compute an element in the set $R_{1}(f,\sigma)$.
To this end, we denote by $I^{[\Sigma]}$ the labeled module generated by $f_1,\dots,f_r$ and by $H_\sigma$ a Gr\"obner basis of $\Syz(I^{[\Sigma]})$ up to signature $\sigma$.

The general idea of Algorithm~\ref{algo:short-rep} is still to reduce the problem of computing sparse cofactor representations to computing certain solutions of a linear system.
However, instead of choosing all polynomials $a f_i b$ to form the linear system like Algorithm~\ref{algo:decidability} does,
we use the information provided by $H_\sigma$ to trim this set.
More precisely, we find a finite set of module monomials $B = \{\mu_1,\dots,\mu_d\} \subseteq \MFr$ such that
$R_{i}(f,\sigma)$, $i = 0,1$, has non-empty intersection with the $\QQ$\nobreakdash-vector space generated by $B$ and then only consider the polynomials $\{\overline \mu_1,\dots, \overline \mu_d\}$ to form the linear system.
Furthermore, we now no longer seek a sparsest solution of the resulting system but an $\ell_1$-minimal solution, which can be found with linear programming.

It remains to discuss how to find a suitable basis $B$ and how to translate the problem of finding $\ell_1$-minimal solutions of a linear system into a linear programming problem.

\subsection{Finding a suitable basis $B$}\label{sec:finding-basis}

Algorithm~\ref{algo:decidability} essentially uses the basis $B = \{a \varepsilon_i b \mid a,b \in \<X>, i = 1,\dots,r, \sig(a\varepsilon_i b) \prec \sigma\}$,
which leads to finite dimensional, yet infeasibly large, linear systems.
Using a Gr\"obner basis of $\Syz(I^{[\Sigma]})$ up to signature $\sigma$, we can drastically reduce the dimension of the search space. 
To this end, we extend the notion of rewriting to module elements.

\begin{definition}
Let $\alpha,\gamma \in \Sigma$ and $a,b \in \<X>$ such that $\supp(\alpha) \cap \supp(a\gamma b) \neq \emptyset$.
For every $c \in \QQ$, we say that $\alpha$ \emph{can be rewritten to $\alpha + ca\gamma b$ by $\gamma$}.

Furthermore, we say that $\alpha$ \emph{can be rewritten to $\beta$ by} $H \subseteq \Sigma$ if there are $\beta_0,\dots,\beta_d \in \Sigma$,
$\beta_d = \alpha$, $\beta_0 = \beta$ and $\gamma_1,\dots,\gamma_d \in H$ such that $\beta_{k}$ can be rewritten to $\beta_{k-1}$ by $\gamma_k$ for all $k = 1,\dots,d$.
\end{definition}

With this, we can state a module version of Lemma~\ref{lemma:rewriting-short-rep}.
We note that we state all results in this section for both the $\ell_0$-``norm'' and the $\ell_1$-norm to emphasize that they hold for both complexity measures likewise and that the restriction to $\lVert\cdot\rVert_1$ only comes later for the linear programming.

\begin{lemma}\label{lemma:module-rewriting}
Let $i \in \{0,1\}$.
Furthermore, let $\alpha \in R(f,\sigma)$, $\alpha_i \in R_i(f,\sigma)$, and let $H_\sigma$ be a Gr\"obner basis of $\Syz(I^{[\Sigma]})$ up to signature $\sigma$.
Then $\alpha$ can be rewritten to $\alpha_i$ by $H_\sigma$.
In particular, this rewriting can be done so that the signature of every rewriter $a_j \gamma_j b_j$ is less than $\sigma$.
\end{lemma}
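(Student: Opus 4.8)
The plan is to reduce the statement to Lemma~\ref{lemma:rewriting-short-rep} by passing from module elements to their associated polynomials, and then to lift the polynomial rewriting back to a module rewriting using the Gröbner basis property of $H_\sigma$. The central observation is that the difference $\gamma \coloneqq \alpha - \alpha_i \in \Sigma$ is a syzygy: since $\overline{\alpha} = f = \overline{\alpha_i}$, we have $\overline{\gamma} = 0$, so $\gamma \in \Syz(I^{[\Sigma]})$. Moreover, because both $\alpha$ and $\alpha_i$ lie in $R(f,\sigma)$, we have $\sig(\alpha) \prec \sigma$ and $\sig(\alpha_i) \prec \sigma$, hence $\sig(\gamma) \prec \sigma$ as well (the signature of a difference is bounded by the max of the two signatures). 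Therefore the defining property of a Gröbner basis of $\Syz(I^{[\Sigma]})$ up to signature $\sigma$ applies: we can write $\gamma = \sum_{j=1}^e c_j a_j \gamma_j b_j$ with $\gamma_j \in H_\sigma$ and $\sig(a_j \gamma_j b_j) \preceq \sig(\gamma) \prec \sigma$ for all $j$.

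The next step is to convert this algebraic identity $\alpha_i = \alpha - \sum_j c_j a_j \gamma_j b_j$ into an actual \emph{rewriting} sequence, i.e.\ an ordering of the summands $c_j a_j \gamma_j b_j$ so that each partial step shares support with the current intermediate element. This is exactly the module analogue of the argument in Lemma~\ref{lemma:rewriting-short-rep}: I would run the same induction, but on the number $e$ of syzygy summands rather than on the weight of a cofactor representation. At each stage, if $\beta = \alpha - \sum_{j \in J} c_j a_j \gamma_j b_j$ is not yet equal to $\alpha$ minus the full sum, then $\beta$ differs from the target by a nonzero combination of the remaining $a_j\gamma_j b_j$; since each $\gamma_j$ is itself a syzygy, the polynomial side is invisible, so I must argue on the module side that $\supp(\beta)$ meets $\supp(a_j \gamma_j b_j)$ for at least one remaining $j$. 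This holds because $\sum_{j \in J^c} c_j a_j \gamma_j b_j \neq 0$ and this sum equals $\beta - \alpha_i$ (or whatever the remaining target is), so its support is contained in the union of the individual supports, forcing a nonempty intersection with at least one; after reordering so that $j$ is the ``next'' index, the single-step rewriting condition is satisfied and we recurse on the smaller index set. This gives a rewriting of $\alpha$ to $\alpha_i$ by $H_\sigma$, and by construction every rewriter $a_j \gamma_j b_j$ has signature $\prec \sigma$.

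The main obstacle I anticipate is the bookkeeping around the signature bound on the individual rewriters and making sure the reordering in the inductive step does not interfere with it — but since $\sig(a_j \gamma_j b_j) \prec \sigma$ holds for \emph{every} $j$ simultaneously (it is a property of each summand, not of the order in which we apply them), any reordering is harmless, so this turns out to be a non-issue. A smaller subtlety is that a rewriting step as defined permits adding a term $c a \gamma b$ only when the supports overlap, and one must check that the ``leftover'' element at each stage is still a genuine module element to which the definition applies; this is automatic since $\Sigma$ is closed under the operations involved. One should also remark, as the statement does, that this lemma simultaneously covers $i = 0$ and $i = 1$: nothing in the argument uses which complexity measure $\alpha_i$ minimizes, only that $\alpha_i \in R(f,\sigma)$, so the same proof works verbatim for both cases.
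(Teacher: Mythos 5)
Your setup is right: passing to the syzygy $\alpha - \alpha_i$, invoking the Gröbner-basis property of $H_\sigma$ to get a decomposition with all signatures $\prec\sigma$, and then trying to peel off one summand at a time to build a rewriting chain is exactly the paper's strategy. But the inductive step has a genuine gap, and your remark at the end that ``nothing in the argument uses which complexity measure $\alpha_i$ minimizes'' is in fact the symptom of the error.

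In Lemma~\ref{lemma:rewriting-short-rep} the target of the rewriting is $0$, so the element being rewritten literally equals the sum of the rewriters and its support is contained in the union of their supports — that is what forces an overlap. In the module version the target is $\alpha_i$, not $0$. What you correctly observe is that $\beta - \alpha_i = \sum_{j\in J^c} c_j a_j\gamma_j b_j$ has support inside $\bigcup_j \supp(a_j\gamma_j b_j)$, so $\supp(\beta-\alpha_i)$ meets some $\supp(a_j\gamma_j b_j)$. But what the definition of rewriting requires is $\supp(\beta)\cap\supp(a_j\gamma_j b_j)\neq\emptyset$, and you cannot pass from the first statement to the second: the monomials in $\supp(\beta-\alpha_i)$ may come entirely from $\alpha_i$ and not from $\beta$. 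Concretely, if $\alpha=\varepsilon_1$ and $\gamma_1=\varepsilon_2$, $\gamma_2=\varepsilon_3$ are syzygies, then $\alpha'=\varepsilon_1-\varepsilon_2+\varepsilon_3\in R(f,\sigma)$ but $\alpha$ cannot be rewritten to $\alpha'$ by $\{\gamma_1,\gamma_2\}$, since no $a\gamma_k b$ ever shares a monomial with $\varepsilon_1$. So without minimality of $\alpha_i$ the statement is actually false, and your induction has no way to produce the first rewriting step.

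The paper closes this gap by using exactly the minimality you discard: if $\supp(\alpha)\cap\supp(\beta)=\emptyset$ with $\beta\neq 0$, then by additivity of $\lVert\cdot\rVert_i$ on disjoint supports (Lemma~\ref{lemma:disjoint}) one gets $\lVert\alpha_i\rVert_i=\lVert\alpha\rVert_i+\lVert\beta\rVert_i>\lVert\alpha\rVert_i\geq\lVert\alpha_i\rVert_i$, a contradiction with $\alpha_i\in R_i(f,\sigma)$. This forces a shared monomial and hence a shared monomial with some single $a_j\gamma_j b_j$, at which point the induction proceeds as you describe. The conclusion still applies to both $i=0$ and $i=1$, but only because both norms satisfy Lemma~\ref{lemma:disjoint} and $\alpha_i$ is assumed minimal in the respective norm — not because minimality is irrelevant.
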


To prove Lemma~\ref{lemma:module-rewriting}, we make use of the fact that the $\ell_0$\nobreakdash-``norm'' and the $\ell_1$-norm are linear for elements of disjoint support.
\begin{lemma}\label{lemma:disjoint}
For $\alpha,\beta \in \Fr$ with $\supp(\alpha) \cap \supp(\beta) = \emptyset$, we have $\lVert \alpha + \beta\rVert_i = \lVert \alpha \rVert_i + \lVert \beta\rVert_i$ for $i = 0,1$.
\end{lemma}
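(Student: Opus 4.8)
The plan is to prove Lemma~\ref{lemma:disjoint} by writing both $\alpha$ and $\beta$ in their unique representations over pairwise distinct module monomials and then observing that, since the supports are disjoint, the concatenation of these two lists of monomials is again a list of pairwise distinct module monomials, and hence it is exactly the unique representation of $\alpha + \beta$. First I would write $\alpha = \sum_{i=1}^{d} c_i \mu_i$ with $\mu_i \in \MFr$ pairwise distinct and $c_i \in \QQ$ nonzero, and $\beta = \sum_{j=1}^{e} c'_j \mu'_j$ similarly, so that $\supp(\alpha) = \{\mu_1,\dots,\mu_d\}$ and $\supp(\beta) = \{\mu'_1,\dots,\mu'_e\}$. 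The hypothesis $\supp(\alpha)\cap\supp(\beta) = \emptyset$ says precisely that the $\mu_i$ and the $\mu'_j$ are all distinct from one another, so $\alpha + \beta = \sum_{i=1}^{d} c_i \mu_i + \sum_{j=1}^{e} c'_j \mu'_j$ is a representation of $\alpha+\beta$ over $d+e$ pairwise distinct module monomials with nonzero coefficients. By uniqueness of such a representation, this is \emph{the} support expansion of $\alpha+\beta$, so $\supp(\alpha+\beta) = \supp(\alpha) \sqcup \supp(\beta)$ and the nonzero coefficients of $\alpha+\beta$ are exactly $c_1,\dots,c_d,c'_1,\dots,c'_e$.

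From this the two claimed identities follow immediately. For $i = 0$, we have $\lVert\alpha+\beta\rVert_0 = |\supp(\alpha+\beta)| = |\supp(\alpha)| + |\supp(\beta)| = \lVert\alpha\rVert_0 + \lVert\beta\rVert_0$, using that the union is disjoint. For $i = 1$, we have $\lVert\alpha+\beta\rVert_1 = \sum_{i=1}^{d}|c_i| + \sum_{j=1}^{e}|c'_j| = \lVert\alpha\rVert_1 + \lVert\beta\rVert_1$, reading off the coefficient list obtained above. One small point worth stating explicitly is that no cancellation can occur precisely because the monomial indexing sets are disjoint; if a $\mu_i$ equaled some $\mu'_j$ the coefficients $c_i + c'_j$ could vanish and the additivity would fail, so the disjointness hypothesis is exactly what is needed.

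There is essentially no obstacle here: the entire content is the uniqueness of the monomial expansion in the free bimodule $\Fr$, which was recorded in Section~\ref{sec:free-bimodule}, combined with the trivial bookkeeping that concatenating two lists of pairwise distinct symbols that share no element yields a list of pairwise distinct symbols. The only thing to be careful about in writing it up is to phrase it uniformly for both $i=0$ and $i=1$ so as not to duplicate the argument, which can be done by first establishing the support/coefficient decomposition and then specializing.
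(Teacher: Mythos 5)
Your proof is correct, and it is the natural argument from uniqueness of the monomial expansion in $\Fr$. In fact, the paper states Lemma~\ref{lemma:disjoint} without any proof, treating it as immediate; your write-up simply makes that omitted reasoning explicit, including the key observation that disjointness of supports rules out cancellation.
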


\begin{proof}[Proof of Lemma~\ref{lemma:module-rewriting}]
The difference $\alpha - \alpha_i$ is a syzygy with signature $\prec \sigma$.
Since $H_\sigma$ is a Gr\"obner basis of $\Syz(I^{[\Sigma]})$ up to signature $\sigma$, there exist $d \in \NN$ and $\gamma_j \in H_\sigma$, $c_j \in \QQ$, $a_j,b_j \in \<X>$ such that $\alpha_i = \alpha - \sum_{j = 1}^d c_j a_j \gamma_j b_j$
and $\sig(a_j \gamma_j b_j) \preceq \max\{\sig(\alpha),\sig(\alpha_i)\} \prec \sigma$ for all $j$.
Now, we essentially follow the proof of Lemma~\ref{lemma:rewriting-short-rep} and perform induction on $d$.

The case $d = 0$ is clear.
Assume now that $d > 0$ and that the result is proven for all pairs $(\alpha,\alpha_i)$ such that $\alpha - \alpha_i$ has a representation with $d-1$ terms.
Let $\beta = \sum_{j = 1}^d c_j a_j \gamma_j b_j$.
If $\beta = 0$, we are done since $\alpha = \alpha_i$.
So assume $\beta \neq 0$, which implies $\lVert \beta \rVert_i > 0$.
Then we must have $\supp(\alpha) \cap \supp(\beta) \neq \emptyset$, as otherwise Lemma~\ref{lemma:disjoint} would yield the contradiction
$\lVert \alpha_i \rVert_i = \lVert \alpha - \beta \rVert_i = \lVert \alpha \rVert_i + \lVert \beta \rVert_i > \lVert \alpha \rVert_i \geq \lVert \alpha_i \rVert_i$,
where the last inequality follows from the minimality of $\lVert \alpha_i \rVert_i$.
Thus, we have $\supp(\alpha) \cap \supp(a_j \gamma_j b_j) \neq \emptyset$ for some $1 \leq j \leq d$.
W.l.o.g.\ assume $j = d$.
Hence, $\alpha$ can be rewritten to $\beta_{d-1} = \alpha - c_d a_d \gamma_d b_d$ by $\gamma_d \in H_\sigma$.
Note that $\sig(a_d \gamma_d b_d) \prec \sigma$.
Since $\beta_{d-1} - \alpha_i = \sum_{j=1}^{d-1} c_j a_j \gamma_j b_j$ has a representation with $d-1$ terms, the induction hypothesis implies that $\beta_{d-1}$ can be rewritten to $\alpha_i$ by $H_\sigma$ using only rewriters $a_j \gamma_j b_j$ with signature $\prec \sigma$.
\end{proof}

Lemma~\ref{lemma:module-rewriting} says that any $\alpha \in R(f,\sigma)$ can be rewritten to each element in $R_i(f,\sigma)$, $i = 0,1$, by $H_\sigma$
using only rewriters with signature bounded by $\sigma$.
Consequently, to find a suitable basis $B$, it suffices, starting from some $\alpha$, to only choose those syzygies that can appear in such rewriting sequences.
Finding these elements is a purely combinatorial problem that can be solved without performing any actual rewriting steps.
This leads to Algorithm~\ref{algo:sym-pre}, in which we collect precisely all those relevant syzygies.
Algorithm~\ref{algo:sym-pre} can be considered as an adaptation of the symbolic preprocessing in the F4 algorithm~\cite{faugere1999new}.
In the following, for $V \subseteq \Fr$, let $\supp(V) = \bigcup_{\gamma \in V} \supp(\gamma)$.

\begin{algorithm}
  \SetAlgoLined
  \SetAlgoNoEnd
  \KwIn{$\alpha \in R(f,\sigma)$, $H_\sigma$ a Gr\"obner basis of $\Syz(I\Sig)$ up to signature $\sigma$}
  \KwOut{\mbox{$V\subseteq \Syz(I\Sig)$ such that $R_i(f,\sigma)\subseteq\alpha+\myspan_\QQ(V)$,~$i = 0,1$}}
   $V \leftarrow \emptyset$\;
   $\texttt{todo} \leftarrow \supp(\alpha)$\;
   $\texttt{done} \leftarrow \emptyset$\;
   \While{$\texttt{todo} \neq \emptyset$}{
    select $\mu \in \texttt{todo}$, remove it, and add it to \texttt{done}\;
    $\texttt{new} \leftarrow \{ a\gamma b \mid a,b\in\<X>, \gamma \in H_\sigma, \mu \in \supp(a\gamma b)$, $\sig(a\gamma b) \prec \sigma\}$\;
    $\texttt{todo} \leftarrow \texttt{todo} \cup (\supp(\texttt{new}) \setminus \texttt{done})$\;
    $V \leftarrow V \cup \texttt{new}$\;
  }
  \Return $V$\label{line:return}\;
  \caption{Finding relevant syzygies}
  \label{algo:sym-pre}
\end{algorithm}
\begin{proposition}\label{prop:sym-pre}
Algorithm~\ref{algo:sym-pre} terminates and is correct.
\end{proposition}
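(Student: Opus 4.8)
The plan is to establish termination and correctness separately.

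For termination, I would first note that every module monomial ever inserted into \texttt{todo} has signature $\prec\sigma$: the initial contents $\supp(\alpha)$ satisfy $\mu\preceq\sig(\alpha)\prec\sigma$ because $\alpha\in R(f,\sigma)$, and every monomial added later lies in $\supp(a\gamma b)$ for some $a\gamma b$ with $\sig(a\gamma b)\prec\sigma$, hence is $\preceq\sig(a\gamma b)\prec\sigma$. Since $\preceq$ is fair, $\{\mu\in\MFr\mid\mu\prec\sigma\}$ is finite. Each pass of the while loop moves one monomial permanently from \texttt{todo} to \texttt{done}, and the ``$\setminus\,\texttt{done}$'' clause guarantees that no monomial is ever re-inserted into \texttt{todo}; hence \texttt{done} grows strictly and is bounded, so the loop performs finitely many iterations. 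It remains to check that each individual iteration terminates, i.e.\ that \texttt{new} is finite and computable: for a fixed $\mu=u\varepsilon_\ell v$ and a fixed $\gamma\in H_\sigma$ (a finite set), a monomial $a\gamma b$ can contain $\mu$ only if some support monomial of $\gamma$ has the form $u'\varepsilon_\ell v'$ with $u'$ a suffix of $u$ and $v'$ a prefix of $v$, which then determines $a$ and $b$; there are only finitely many such choices, and the condition $\sig(a\gamma b)\prec\sigma$ is decidable.

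For correctness, the inclusion $V\subseteq\Syz(I^{[\Sigma]})$ is immediate, since every element of $V$ has the form $a\gamma b$ with $\gamma\in H_\sigma$ and $\Syz(I^{[\Sigma]})$ is a $K\<X>$-subbimodule. The substantial part is $R_i(f,\sigma)\subseteq\alpha+\myspan_\QQ(V)$ for $i=0,1$. Fix $\alpha_i\in R_i(f,\sigma)$. By Lemma~\ref{lemma:module-rewriting}, $\alpha$ can be rewritten to $\alpha_i$ by $H_\sigma$ using only rewriters of signature $\prec\sigma$; write the rewriting sequence as $\beta_d=\alpha,\beta_{d-1},\dots,\beta_0=\alpha_i$ with $\beta_{k-1}=\beta_k+c_k a_k\gamma_k b_k$, $\gamma_k\in H_\sigma$, $\sig(a_k\gamma_k b_k)\prec\sigma$, and $\supp(\beta_k)\cap\supp(a_k\gamma_k b_k)\neq\emptyset$. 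I would prove, by induction on $k$ running from $d$ down to $1$, that $\supp(\beta_k)$ is contained in the final value of \texttt{done} and that $a_k\gamma_k b_k\in V$. For $k=d$ this holds because $\supp(\beta_d)=\supp(\alpha)$ is the initial content of \texttt{todo}, and, since the loop terminates with \texttt{todo} empty, every monomial ever in \texttt{todo} ends up in \texttt{done}. For the step, choose $\mu\in\supp(\beta_k)\cap\supp(a_k\gamma_k b_k)$; by the induction hypothesis $\mu\in\texttt{done}$, so at the iteration in which $\mu$ was selected, $a_k\gamma_k b_k$ met exactly the membership conditions defining \texttt{new}, whence $a_k\gamma_k b_k\in V$, and all of $\supp(a_k\gamma_k b_k)$ was either already in \texttt{done} or added to \texttt{todo} and thus reaches \texttt{done}. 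Hence $\supp(\beta_{k-1})\subseteq\supp(\beta_k)\cup\supp(a_k\gamma_k b_k)\subseteq\texttt{done}$, completing the induction. It follows that $\alpha_i-\alpha=\sum_{k=1}^{d}c_k a_k\gamma_k b_k\in\myspan_\QQ(V)$.

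The point needing the most care is the bookkeeping in that induction: one must argue precisely that processing a module monomial both pulls in every relevant rewriter and schedules the rest of that rewriter's support for processing, so that the entire rewriting sequence supplied by Lemma~\ref{lemma:module-rewriting} stays inside $\alpha+\myspan_\QQ(V)$. The remainder is routine; note that the ``in particular'' clause of Lemma~\ref{lemma:module-rewriting} (rewriters have signature $\prec\sigma$) is exactly what makes those rewriters eligible for \texttt{new}.
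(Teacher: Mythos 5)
Your proof is correct and follows essentially the same route as the paper, which likewise derives termination from the facts that only monomials $\prec\sigma$ are ever inserted into \texttt{todo}, each is processed at most once, and $\preceq$ is fair, and derives correctness from Lemma~\ref{lemma:module-rewriting}. You simply fill in the details that the paper leaves implicit (notably the downward induction along the rewriting sequence showing every rewriter lands in $V$, and the finiteness of \texttt{new} in each pass).
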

\begin{proof}
The conditions on the elements in \texttt{new} ensure that only module monomials smaller than $\sigma$ are inserted into \texttt{todo}.
Furthermore, each monomial is processed at most once. 
Consequently, termination follows from the fact that there are only finitely many monomials smaller than $\sigma$ (recall that $\preceq$ is fair).
Correctness follows from Lemma~\ref{lemma:module-rewriting}.
\end{proof}


Using Algorithm~\ref{algo:sym-pre}, we can set $B = \supp(\alpha) \cup \supp(V)$ as a basis of the search space, where $V$ is the output of the algorithm given $\alpha$ and $H_\sigma$ as input.
In many cases, this set is small enough to reasonably work with. 

\subsection{Detecting redundant syzygies}\label{sec:pruning}

As an optional step, we can remove redundant elements from $V$ before forming the basis $B$ in order to obtain a smaller basis, and thus, a smaller linear program to solve.
More precisely, since we only want to compute one element in $R_i(f,\sigma)$, $i = 0,1$, we can remove syzygies as long as we can ensure that there remains at least one rewriting sequence from $\alpha$ to at least one element in $R_i(f,\sigma)$.
We mention two basic techniques that turned out useful in practice.

The first technique allows to remove syzygies from $V$ that consist mostly of terms that appear in no other element.
Such syzygies cannot lead to simpler representations.
To make this statement precise, for $W \subseteq V$ and $\beta = \sum_j  c_j \mu_j \in W$ with $c_j \in \QQ$, $\mu_j \in \MFr$, we denote
\begin{align*}
	\beta_U &\coloneqq \textstyle{\sum_j} c_j \mu_j  \text{ with }j \text{ such that } \mu_j \notin \supp((V \cup \{\alpha\}) \setminus \{\beta\}),\\
	\beta_V &\coloneqq \textstyle{\sum_j} c_j \mu_j  \text{ with }j \text{ such that } \mu_j \in \supp((V \cup \{\alpha\}) \setminus W).
\end{align*}
Intuitively, the element $\beta_U$ contains all those terms of $\beta$ that are unique to $\beta$ and that appear in no other element of $V \cup \{\alpha\}$,
and $\beta_V$ contains those terms that appear in $\beta$ as well as in elements outside $W$.

\begin{proposition}\label{prop:redundant}
Let $i \in \{0,1\}$.
Furthermore, let $\alpha \in R(f,\sigma)$ and $V \subseteq \Syz(I\Sig)$ such that 
\[
	(\alpha + \myspan_\QQ(V)) \cap R_i(f,\sigma) \neq \emptyset.
\]
If $W \subseteq V$ satisfies $\lVert \beta_V \rVert_i \leq \lVert \beta_U \rVert_i$ for all $\beta \in W$, then
\[
	(\alpha + \myspan_\QQ(V \setminus W)) \cap R_i(f,\sigma) \neq \emptyset.
\]
\end{proposition}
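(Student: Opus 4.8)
The plan is to start from an element $\alpha_i \in (\alpha + \myspan_\QQ(V)) \cap R_i(f,\sigma)$ and show that we can ``repair'' it to an element of $R_i(f,\sigma)$ that already lies in $\alpha + \myspan_\QQ(V \setminus W)$, without increasing its $\ell_i$-weight. Write $\alpha_i = \alpha + \sum_{\beta \in V} c_\beta \beta$ with $c_\beta \in \QQ$, and let $\alpha_i' \coloneqq \alpha + \sum_{\beta \in V \setminus W} c_\beta \beta$, i.e.\ we simply delete the contributions of the elements of $W$. Then $\alpha_i'$ differs from $\alpha_i$ by a syzygy (a $\QQ$-combination of elements of $W \subseteq \Syz(I\Sig)$), so $\overline{\alpha_i'} = \overline{\alpha_i} = f$ and $\alpha_i' \in R(f)$; moreover $\sig(\alpha_i') \preceq \max\{\sig(\alpha), \max_{\beta} \sig(\beta)\} \prec \sigma$ using the signature bounds already in force, so $\alpha_i' \in R(f,\sigma)$, and by construction $\alpha_i' \in \alpha + \myspan_\QQ(V\setminus W)$. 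It therefore remains to show $\lVert \alpha_i' \rVert_i \leq \lVert \alpha_i \rVert_i$, which by the minimality of $\lVert \alpha_i \rVert_i$ in $R(f,\sigma)$ will force equality and hence $\alpha_i' \in R_i(f,\sigma)$.

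The weight comparison is where the hypothesis $\lVert \beta_V \rVert_i \le \lVert \beta_U \rVert_i$ comes in, and I expect it to be the technical heart of the argument. First I would record the key support fact: by definition of $\beta_U$, the module monomials appearing in $\beta_U$ occur in no element of $(V \cup \{\alpha\}) \setminus \{\beta\}$; in particular, for distinct $\beta, \beta' \in W$ the supports of $\beta_U$ and $\beta'_U$ are disjoint, and $\supp(\beta_U)$ is disjoint from $\supp(\alpha)$ and from $\supp(\beta')$ for every $\beta' \in V \setminus W$. Consequently, in the expansion $\alpha_i = \alpha + \sum_{\beta \in V \setminus W} c_\beta \beta + \sum_{\beta \in W} c_\beta \beta$, the coefficient of a monomial $\mu \in \supp(\beta_U)$ (for $\beta\in W$ with $c_\beta \ne 0$) is exactly $c_\beta$ times the coefficient of $\mu$ in $\beta$ --- nothing else can cancel it. Using Lemma~\ref{lemma:disjoint} to split off these ``private'' monomials, one gets a lower bound
\[
	\lVert \alpha_i \rVert_i \;\geq\; \bigl\lVert \alpha + \textstyle\sum_{\beta \in V \setminus W} c_\beta \beta + \sum_{\beta \in W} c_\beta \beta_V \bigr\rVert_i \;+\; \sum_{\beta \in W} \lVert c_\beta \beta_U \rVert_i,
\]
where in the $i=0$ case $\lVert c_\beta \beta_U\rVert_0 = \lVert \beta_U\rVert_0$ whenever $c_\beta \ne 0$, and in the $i=1$ case $\lVert c_\beta \beta_U\rVert_1 = |c_\beta|\,\lVert \beta_U\rVert_1$.

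To finish, I would bound the ``mixed'' term from above by the triangle inequality: the $\beta_V$-parts can be absorbed into the $V \setminus W$ sum, giving
\[
	\bigl\lVert \alpha + \textstyle\sum_{\beta \in V\setminus W} c_\beta \beta + \sum_{\beta \in W} c_\beta \beta_V \bigr\rVert_i \;\geq\; \lVert \alpha_i' \rVert_i - \sum_{\beta \in W} \lVert c_\beta \beta_V \rVert_i
\]
for $i=1$ (and an analogous, slightly more careful counting argument for $i=0$, since the $\ell_0$-``norm'' only satisfies subadditivity, not a clean triangle inequality with scalars --- here one uses that each monomial of $\beta_V$ contributes at most one to the count). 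Combining the two displays and the hypothesis $\lVert \beta_V \rVert_i \le \lVert \beta_U\rVert_i$ (together with $|c_\beta|\ge 1$ in the relevant $i=1$ terms, or $c_\beta \ne 0$ for $i=0$), the $\beta_V$ contributions are dominated by the $\beta_U$ contributions, yielding $\lVert \alpha_i \rVert_i \ge \lVert \alpha_i'\rVert_i$ as desired. The main obstacle I anticipate is making the $\ell_0$ bookkeeping precise: unlike the $\ell_1$-norm, $\lVert\cdot\rVert_0$ is not a seminorm, so every step that in the $\ell_1$-case is ``triangle inequality'' must instead be phrased as a statement about which monomials can possibly be nonzero, and one has to verify that deleting the $W$-part cannot create cancellation among the surviving monomials that would have helped $\alpha_i$ but not $\alpha_i'$ --- the disjointness of the $\beta_U$-supports is exactly what rules that out.
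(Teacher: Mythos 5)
Correct, and this is essentially the paper's own proof: you delete the $W$-contribution from a minimizer and argue that the $\ell_i$-weight cannot increase, splitting off the pairwise-disjoint $\beta_U$-parts (which appear with no cancellation) and absorbing the $\beta_V$-parts via a reverse triangle inequality — precisely the content of the paper's Lemmas~\ref{lemma:redundant1} and~\ref{lemma:redundant2} applied to the decomposition $\delta = \gamma + \beta$ with $\gamma \in \alpha + \myspan_\QQ(V\setminus W)$ and $\beta\in\myspan_\QQ(W)$. The only small inaccuracy is the parenthetical about needing ``$|c_\beta|\ge 1$'': the estimate $\sum_\beta |c_\beta|\bigl(\lVert\beta_U\rVert_1-\lVert\beta_V\rVert_1\bigr)\ge 0$ holds for arbitrary $c_\beta$, and the $\ell_0$-case is no harder since $\lVert\cdot\rVert_0$ does satisfy both the triangle and reverse triangle inequalities.
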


Proposition~\ref{prop:redundant} provides a sufficient condition for a subset $W \subseteq V$ to be redundant. 
In order to prove this, we need the following two lemmas.
The first one states that the required property of $W$ extends to the whole linear span.
To this end, we extend the definition of $\beta_U$ and $\beta_V$ to elements $\beta = \sum_j b_j \beta_j \in \myspan_\QQ(W)$,
where $b_j \in \QQ$ and $\beta_j \in W$, by 
$\beta_U \coloneqq \sum_j b_j \beta_{j,U}$ and $\beta_V \coloneqq \sum_j b_j \beta_{j,V}$.
Note that, in general, these definitions depend on the representation of $\beta$ in terms of the elements in $W$;
different linear combinations of the same element can yield different definitions of $\beta_{U}$ and $\beta_{V}$.
Therefore, to obtain an unambiguous definition, we assume that, for every element $\beta \in \myspan_\QQ(W)$, one particular 
representation in terms of $W$ has been fixed, and this is the representation that we use to compute $\beta_{U}$ and $\beta_{V}$.
We note that, for all our arguments, the particular choice of the representation does not matter.
It is only important that, for each $\beta$, the elements $\beta_{U}$ and $\beta_{V}$ are computed with respect to the same representation.

\begin{lemma}\label{lemma:redundant1}
Let $V,W$ be as in Proposition~\ref{prop:redundant}.
If $\beta \in \myspan_\QQ(W)$, then $\lVert \beta_V \rVert_i \leq \lVert \beta_U \rVert_i$.
\end{lemma}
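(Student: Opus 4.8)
\textbf{Proof plan for Lemma~\ref{lemma:redundant1}.}

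The plan is to reduce the claim for a general element $\beta \in \myspan_\QQ(W)$ to the pointwise hypothesis $\lVert \beta_j{}_V \rVert_i \leq \lVert \beta_j{}_U \rVert_i$ available for each generator $\beta_j \in W$, using that the maps $\beta \mapsto \beta_U$ and $\beta \mapsto \beta_V$ are linear once a fixed representation of $\beta$ over $W$ has been chosen. Write $\beta = \sum_j b_j \beta_j$ for that fixed representation, so that by definition $\beta_U = \sum_j b_j \beta_{j,U}$ and $\beta_V = \sum_j b_j \beta_{j,V}$. The point to exploit is a \emph{disjointness of supports}: by construction, the terms collected in the various $\beta_{j,U}$ are the module monomials that are unique to a single element $\beta_j$ of $V \cup \{\alpha\}$; in particular, for $j \neq j'$, $\supp(\beta_{j,U})$ and $\supp(\beta_{j',U})$ are disjoint, and each such monomial occurs in no $\beta_{j',V}$ either. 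Hence in the sum $\beta_U = \sum_j b_j \beta_{j,U}$ there is no cancellation between distinct summands, and likewise the supports $\supp(b_j\beta_{j,U})$ remain pairwise disjoint.

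With that observation, for the $\ell_0$ case ($i=0$) I would argue $\lVert \beta_U \rVert_0 = \sum_j \lVert b_j \beta_{j,U}\rVert_0 = \sum_{j : b_j \neq 0} \lVert \beta_{j,U} \rVert_0$ by Lemma~\ref{lemma:disjoint} and disjointness, while $\lVert \beta_V \rVert_0 \leq \sum_{j : b_j \neq 0} \lVert \beta_{j,V} \rVert_0$ by subadditivity of $\lVert\cdot\rVert_0$ under sums (the support of a sum is contained in the union of supports). For the $\ell_1$ case ($i=1$), the same disjointness gives $\lVert \beta_U \rVert_1 = \sum_j |b_j|\,\lVert \beta_{j,U}\rVert_1$ exactly, and the triangle inequality gives $\lVert \beta_V \rVert_1 \leq \sum_j |b_j|\,\lVert \beta_{j,V}\rVert_1$. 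In both cases, combining with the hypothesis $\lVert \beta_{j,V}\rVert_i \leq \lVert \beta_{j,U}\rVert_i$ applied termwise yields
\[
	\lVert \beta_V \rVert_i \;\leq\; \sum_j |b_j|^{\,[i=1]} \lVert \beta_{j,V}\rVert_i \;\leq\; \sum_j |b_j|^{\,[i=1]} \lVert \beta_{j,U}\rVert_i \;=\; \lVert \beta_U \rVert_i,
\]
where $|b_j|^{[i=1]}$ means the factor $|b_j|$ for $i=1$ and the indicator $[b_j \neq 0]$ for $i=0$; this is the desired inequality.

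The main obstacle I anticipate is bookkeeping the support-disjointness carefully enough to justify the equality $\lVert \beta_U \rVert_i = \sum_j (\cdot)\,\lVert \beta_{j,U}\rVert_i$ with \emph{no} cancellation: one must verify from the definition of $\beta_{j,U}$ that a module monomial appearing in $\beta_{j,U}$ appears in exactly one element of $V \cup \{\alpha\}$, hence cannot reappear (even with an opposite sign) in any other $b_{j'}\beta_{j',U}$. Once this is pinned down, the rest is a routine application of Lemma~\ref{lemma:disjoint} together with subadditivity/triangle inequality on the $V$-parts, where cancellations only help. A minor subtlety is that the decomposition depends on the fixed representation of $\beta$ over $W$, but as remarked before the lemma this is harmless since we only ever use one representation per $\beta$ consistently for both $\beta_U$ and $\beta_V$.
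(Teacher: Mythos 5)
Your proof is correct and follows essentially the same route as the paper: fix a representation $\beta = \sum_j b_j \beta_j$, observe that the $\beta_{j,U}$ have pairwise disjoint supports so that Lemma~\ref{lemma:disjoint} gives exact linearity of $\lVert\cdot\rVert_i$ on $\beta_U$, apply the triangle inequality (subadditivity for $i=0$) on the $\beta_V$ side, and finish with the termwise hypothesis $\lVert \beta_{j,V}\rVert_i \leq \lVert \beta_{j,U}\rVert_i$. The paper merely compresses the two cases $i=0,1$ into one chain of inequalities via the auxiliary scalars $c_j$, whereas you spell them out separately, but the argument is the same.
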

\begin{proof}
Write $\beta = \sum_j b_j \beta_j$ with nonzero $b_j \in \QQ$ and $\beta_j \in W$. 
By assumption  $\lVert \beta_{j,V} \rVert_i \leq \lVert \beta_{j,U} \rVert_i$ for all $j$.
Furthermore, all $\beta_{j,U}$ have pairwise disjoint supports as they consist of the monomials that are unique to each $\beta_j$.
So Lemma~\ref{lemma:disjoint} implies that $\lVert \cdot \rVert_i$ is linear on linear combinations of the $\beta_{j,U}$.
Using this and the triangular inequality, we get with $c_j = 1$ if $i = 0$ and $c_j = |b_j|$ if $i = 1$:
\begin{align*}
	\lVert \beta_V \rVert_i &\leq \textstyle{\sum_j} \lVert b_j \beta_{j,V} \rVert_i = \textstyle{\sum_j} c_j \lVert \beta_{j,V} \rVert_i \\
	&\leq \textstyle{\sum_j} c_j \lVert \beta_{j,U} \rVert_i = \textstyle{\sum_j} \lVert b_j \beta_{j,U} \rVert_i = \lVert \beta_U \rVert_i.\qedhere
\end{align*}
\end{proof}

The second lemma provides a lower bound on the norm of sums $\gamma + \beta \in \alpha + \myspan_\QQ(W)$.

\begin{lemma}\label{lemma:redundant2}
Let $\alpha,V,W$ be as in Proposition~\ref{prop:redundant}.
If $\beta \in \myspan_\QQ(W)$ and $\gamma \in \alpha + \myspan_\QQ(V \setminus W)$, then $\lVert \gamma + \beta \rVert_i \geq \lVert\gamma\rVert_i - \lVert \beta_V\rVert_i + \lVert\beta_U\rVert_i$.
\end{lemma}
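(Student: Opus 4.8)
The plan is to split $\beta$ into three pieces according to where its module monomials occur and then to reduce everything to Lemma~\ref{lemma:disjoint} together with the (reverse) triangle inequality, both of which are valid for $\lVert\cdot\rVert_0$ as well as $\lVert\cdot\rVert_1$ (the $\ell_0$ case because the support of a sum is contained in the union of supports and $\lVert{-\nu}\rVert_i=\lVert\nu\rVert_i$). Besides $\beta_U$ and $\beta_V$, I will use the complementary part $\beta_O\coloneqq\beta-\beta_U-\beta_V$; writing the fixed representation as $\beta=\sum_j b_j\beta_j$ with $\beta_j\in W$, it equals $\sum_j b_j\beta_{j,O}$ where $\beta_{j,O}$ collects the module monomials of $\beta_j$ that occur in some element of $W$ other than $\beta_j$ but in no element of $(V\cup\{\alpha\})\setminus W$.

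First I would record the support-disjointness statements, all read off from the definitions. From the fact that $\beta_{j,U}$ consists of monomials occurring in no element of $(V\cup\{\alpha\})\setminus\{\beta_j\}$, it follows that the supports of distinct $\beta_{j,U}$ are pairwise disjoint, that $\supp(\beta_U)$ is disjoint from $\supp(\beta_V)$ and from $\supp(\beta_O)$ (a monomial unique to $\beta_j$ cannot lie in $\supp(\beta_{j'})$ for $j'\neq j$, nor in the complementary parts of $\beta_j$), and that $\supp(\beta_U)$ is disjoint from $\supp(\gamma)$ for every $\gamma\in\alpha+\myspan_\QQ(V\setminus W)$, because $\supp(\gamma)\subseteq\supp(\alpha)\cup\supp(V\setminus W)$ and both $\alpha$ (here I use $\alpha\notin W$, which holds since $\overline\alpha=f$ while the elements of $W\subseteq V$ are syzygies) and every element of $V\setminus W$ belong to $(V\cup\{\alpha\})\setminus\{\beta_j\}$ for each $j$. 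Symmetrically, since the monomials in $\beta_{j,V}$ occur in $\supp((V\cup\{\alpha\})\setminus W)$ while those in $\beta_{j,O}$ do not, $\supp(\beta_O)$ is disjoint from both $\supp(\beta_V)$ and $\supp(\gamma)$.

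With these in place, the inequality follows by a two-line computation. Regroup $\gamma+\beta=(\gamma+\beta_V)+(\beta_U+\beta_O)$: the two summands have disjoint supports, and $\beta_U,\beta_O$ themselves have disjoint supports, so Lemma~\ref{lemma:disjoint} gives $\lVert\gamma+\beta\rVert_i=\lVert\gamma+\beta_V\rVert_i+\lVert\beta_U\rVert_i+\lVert\beta_O\rVert_i$. Applying the reverse triangle inequality $\lVert\gamma+\beta_V\rVert_i\geq\lVert\gamma\rVert_i-\lVert\beta_V\rVert_i$ and dropping the nonnegative term $\lVert\beta_O\rVert_i$ yields exactly $\lVert\gamma+\beta\rVert_i\geq\lVert\gamma\rVert_i-\lVert\beta_V\rVert_i+\lVert\beta_U\rVert_i$.

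I expect the only real work to be the bookkeeping in the second step: the three-way decomposition is defined through one fixed linear combination $\beta=\sum_j b_j\beta_j$, so I must be careful that the support inclusions $\supp(\beta_U)\subseteq\bigcup_j\supp(\beta_{j,U})$ and its analogues are the ones actually invoked (cancellation in the sums can only shrink supports, which is harmless), and that the auxiliary part $\beta_O$, which disappears from the final bound, is genuinely supported away from both $\gamma$ and $\beta_U$ so that it can be absorbed. Everything else is immediate from Lemma~\ref{lemma:disjoint} and the triangle inequality.
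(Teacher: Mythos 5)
Your proposal is correct and follows essentially the same argument as the paper: your $\beta_O$ is exactly the paper's $\beta' = \beta - (\beta_U + \beta_V)$, and the decomposition of $\gamma+\beta$ into the disjointly supported pieces $\gamma+\beta_V$, $\beta_U$, and $\beta_O$, followed by Lemma~\ref{lemma:disjoint}, the reverse triangle inequality, and dropping $\lVert\beta_O\rVert_i$, is precisely the computation in the paper's proof. You spell out the support-disjointness bookkeeping in more detail (and correctly note that the reverse triangle inequality for $\lVert\cdot\rVert_0$ follows from $\supp(\mu+\nu)\subseteq\supp(\mu)\cup\supp(\nu)$), but there is no substantive difference.
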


\begin{proof}
Let $\beta' = \beta - (\beta_U + \beta_V)$.
By definition, $\beta_U$ and $\beta'$ have pairwise different supports.
Furthermore, $\gamma + \beta_V$ does not share a monomial with $\beta_U$ and $\beta'$
as $\supp(\gamma + \beta_V) \subseteq \supp\left((V \cup \{\alpha\}) \setminus W\right)$ and all monomials of $\beta$ that lie in this set are collected in $\beta_V$.
Therefore, Lemma~\ref{lemma:disjoint} and the inverse triangle inequality imply
\begin{align*}
	\lVert\gamma + \beta\rVert_i &= \lVert \gamma + \beta_V\rVert_{i} + \lVert\beta_U\rVert_i +  \lVert\beta'\rVert_i \\
	&\geq \lVert \gamma + \beta_V\rVert_{i} + \lVert\beta_U\rVert_i \geq \lVert \gamma\rVert_ i - \lVert\beta_V\rVert_i + \lVert\beta_U\rVert_i.\qedhere
\end{align*}
\end{proof}

\begin{proof}[Proof of Proposition~\ref{prop:redundant}]
We claim that removing, if present, elements from $W$ from a representation $\delta \in \alpha + \myspan_\QQ(V)$ cannot increase the norm.
This implies the assertion of the proposition.
To prove our claim, write $\delta$ as $\delta = \gamma + \beta$ 
with $\gamma \in \alpha + \myspan_\QQ(V \setminus W)$ and $\beta \in \myspan_\QQ(W)$.
Now, Lemma~\ref{lemma:redundant1} and~\ref{lemma:redundant2}, show that
$\lVert\delta \rVert_i = \lVert\gamma + \beta \rVert_i \geq  \lVert\gamma \rVert_i - \lVert\beta_{V} \rVert_i + \lVert\beta_{U} \rVert_i \geq \lVert\gamma\rVert_i$.
\end{proof}

The redundancy test provided by Proposition~\ref{prop:redundant} is computationally fairly cheap to check for a given set $W \subseteq V$.
However, finding suitable candidates for $W$ is not so trivial.
In our implementation, we test all singletons $\{\beta\} \subseteq V$ and all subsets
$\{\beta, \gamma\} \subseteq V$ where $\supp(\beta) \cap \supp(\gamma) \neq \emptyset$ and
where $\beta$ and $\gamma$ consist, to at least a third, of unique monomials that appear in no other element in $V$.
Empirically, this provided the best trade-off between efficiency in applying the criterion and the effect it had on pruning $V$.

The second method does not directly allow to detect redundant elements in $V$.
Instead it can be considered as an auxiliary technique that can cause additional applications of Proposition~\ref{prop:redundant}.
The idea is to replace elements in $V$ by linear combinations so that the number of occurrences of certain monomials is reduced. 
In particular, by exploiting the fact that
\begin{equation}\label{eq:span}
	\myspan_\QQ(V \cup \{ \alpha - \beta, \gamma + \beta \}) = \myspan_\QQ(V \cup \{ \alpha - \beta, \alpha + \gamma \}),
\end{equation}
we can reduce the number of occurrences of $\beta$ at the cost of increasing the occurrences of $\alpha$.

In our implementation, we apply this technique to all binomial syzygies $\mu - \sigma \in V$.
After removing all occurrences of $\sigma$, Proposition~\ref{prop:redundant} allows to delete the binomial syzygy from $V$.
Additionally, we apply~\eqref{eq:span} randomly to elements $\alpha - \beta$ where $\lVert \beta \rVert_i > c \lVert \alpha \rVert_i$ for fixed $c > 1$. 
Often, this process triggers further invocations of Proposition~\ref{prop:redundant} to remove elements from $V$.
Table~\ref{table:comparison} shows the efficiency of the two methods presented in this section.

%

\subsection{Translation into linear program}\label{sec:linear-system}
Once we have obtained a reasonable basis of module monomials $B = \{\mu_1,\dots,\mu_d\}$ such that the $\QQ$-vector space generated by $B$ has non-empty intersection with $R_i(f,\sigma)$, $i = 0,1$,
we can set up a linear system $A \mathbf{y} = \mathbf{b}$, where $A$ is the matrix of size $s \times d$, with $s = | \bigcup_j \supp(\overline \mu_j) |$, whose $j$th column contains the coefficients of $\overline \mu_j$, associating to each row of $A$ a monomial $m \in \bigcup_j \supp(\overline \mu_j)$.
Similarly, $\mathbf{b}$ is a vector of size $s$ containing the coefficients of $f$.
The matrix $A$ bears resemblance to the matrices appearing in Gr\"obner basis computations such as the F4 algorithm, aside from two main differences.
In Gr\"obner basis computations, polynomials are encoded as the rows of a matrix and the columns have to be ordered w.r.t.\ the (polynomial) monomial ordering.
In our approach, polynomials are encoded as the columns and the order of the rows is irrelevant.

Every solution $\mathbf{y}$ of $A\mathbf{y} = \mathbf{b}$ corresponds to a cofactor representation of $f$ with support in $B = \{\mu_1,\dots,\mu_d\} \subseteq \MFr$.
To see this, we denote by $A_{i,j}$ the entries of $A$ and by $\mathbf{b}_{i}$ and $\mathbf{y}_{j}$ the coordinates of $\mathbf{b}$ and $\mathbf{y}$ respectively.
Furthermore, let $m_{i} \in \bigcup_j \supp(\overline \mu_j)$ be the monomial that is associated to the $i$th row of $A$ and the $i$th coordinate of $\mathbf{b}$.
Then we can write $f = \sum_{i=1}^{s} \mathbf{b}_{i} m_{i}$ and $\overline \mu_{j} = \sum_{i=1}^{s} A_{i,j} m_{i}$, for $j = 1,\dots,d$, and we see that
\begin{align*}
	f = \sum_{i=1}^{s} \mathbf{b}_{i} m_{i} = \sum_{i=1}^{s} \left(\sum_{j=1}^{d} A_{i,j} \mathbf{y}_{j}\right) m_{i} = \sum_{j=1}^{d} \mathbf{y}_{j} \left(\sum_{i=1}^{s} A_{i,j} m_{i}\right) = \sum_{j=1}^{d} \mathbf{y}_{j} \overline \mu_{j},
\end{align*}
showing that $\sum_{j=1}^{d} \mathbf{y}_{j} \mu_{j} \in R(f,\sigma)$ is a cofactor representation of $f$.

Moreover, every $\ell_i$-minimal solution of $A\mathbf{y} = \mathbf{b}$ corresponds to an element in $R_i(f,\sigma)$. 
As noted before, computing $\ell_0$-minimal, i.e., sparsest, solutions is \NP-hard.
Therefore, we restrict ourselves to the case $i = 1$ and consider the problem
\begin{equation*}
\label{eq:P1}
	(P_1): \quad \min_\mathbf{y} \, \lVert \mathbf{y} \rVert_1 \quad \text{ subject to }\quad  A\mathbf{y} = \mathbf{b},
\end{equation*}
where $\lVert\mathbf{x}\rVert_1 = \sum_j |\mathbf{x}_j |$.
%
It is well-known that \hyperref[eq:P1]{$(P_1)$} can be recast as a linear program, see e.g.~\cite[Sec.~3.1]{chen2001atomic}.
A linear program (in standard form)~\cite{schrijver1998theory} is an optimization problem for $\mathbf{v} \in \QQ^t$ of the form
\begin{equation*}
\label{eq:LP}
	(LP): \quad \min_\mathbf{v} \, \mathbf{c}^T \mathbf{v} \quad \text{ subject to }\quad  U\mathbf{v} = \mathbf{w}, \quad \mathbf{v} \geq 0,
\end{equation*}
where $\mathbf{v} \geq 0$ is to be understood component-wise.
The problem \hyperref[eq:P1]{$(P_1)$} can be equivalently formulated as a linear program by setting
\[
	t = 2d, \quad \mathbf{c}^T = (1,\dots,1), \quad U = (A \mid - A), \quad \mathbf{v} = \begin{pmatrix} \mathbf{p} \\\mathbf{q} \end{pmatrix}, \quad \mathbf{w} = \mathbf{b},
\]
with vectors $\mathbf{p}, \mathbf{q} \in \QQ^d$.
This linear program can then be solved efficiently using the simplex algorithm~\cite{dantzig1951maximization} or interior-point methods~\cite{potra2000interior}
and a solution $\mathbf{y}$ of \hyperref[eq:P1]{$(P_1)$} is given by $\mathbf{y} = \mathbf{p} - \mathbf{q}$.

\subsection{Putting everything together}

Finally, we combine the results of the previous sections to form Algorithm~\ref{algo:short-rep} for computing an element in $R_1(f,\sigma)$.
In the algorithm, $I^{[\Sigma]}$ denotes the labeled module generated by $f_1,\dots,f_r$.

\begin{algorithm}
  \SetAlgoLined
  \KwIn{$f_1,\dots,f_r \in \QQ\<X>$, $f \in (f_1,\dots,f_r)$, $\sigma \in \MFr$, $\alpha \in R(f,\sigma)$}
  \KwOut{an element in $R_1(f,\sigma)$}
  
  $H_\sigma \leftarrow $ Gr\"obner basis of $\Syz(I^{[\Sigma]})$ up to signature $\sigma$\;
  
   $V \leftarrow$ apply Algorithm~\ref{algo:sym-pre} to $\alpha$ and $H_\sigma$\;
   $V \leftarrow$ prune $V$ using the techniques from Section~\ref{sec:pruning}\;
  $\{\mu_1, \dots,\mu_d\} \leftarrow \supp(V \cup \{\alpha\})$\;
  
  $A \leftarrow $ matrix with columns containing the coefficients of $\overline \mu_1,\dots,\overline \mu_d$\;
  $\mathbf{b} \leftarrow$ vector containing the coefficients of $f$\;
  $\mathbf{v} \leftarrow$ solution of the linear program \hyperref[eq:LP]{$(LP$)}
 with $\mathbf{c}^T = (1,\dots,1),\quad U = (A \mid - A), \quad \mathbf{v} = \begin{pmatrix} \mathbf{p} \\\mathbf{q} \end{pmatrix}, \quad \mathbf{w} = \mathbf{b}$\;\label{line:linear-program}

 \Return $\sum_{i=1}^d (\mathbf{p}_i - \mathbf{q}_i) \mu_i$\;
 
  \caption{$\ell_1$-minimal cofactor representation}
  \label{algo:short-rep}
\end{algorithm}

\begin{theorem}
Algorithm~\ref{algo:short-rep} terminates and is correct.
\end{theorem}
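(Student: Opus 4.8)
The plan is to verify termination and correctness of Algorithm~\ref{algo:short-rep} by checking each of its steps in turn, leaning on the results already established for the subroutines.

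\textbf{Termination.} First I would observe that Theorem~\ref{thm:syz-basis} guarantees that the computation of $H_\sigma$ terminates: the signature-based algorithm, stopped at the first syzygy of signature $\succeq \sigma$, yields a finite Gröbner basis of $\Syz(I^{[\Sigma]})$ up to signature $\sigma$. Next, Proposition~\ref{prop:sym-pre} states that Algorithm~\ref{algo:sym-pre} terminates, so the computation of $V$ terminates and $V$ is finite. The pruning step of Section~\ref{sec:pruning} only removes elements from $V$ (or replaces them by linear combinations of an already-finite set), so it trivially terminates. The set $\{\mu_1,\dots,\mu_d\} = \supp(V \cup \{\alpha\})$ is then finite, the matrix $A$ and vector $\mathbf{b}$ have finite size, and the linear program $(LP)$ over $\QQ$ is solvable in finite time by the simplex algorithm or an interior-point method. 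Hence the whole algorithm terminates.

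\textbf{Correctness.} Here the key point is to track the invariant that, after each step, the current data still ``contains'' an element of $R_1(f,\sigma)$. By the standing assumption, $\alpha \in R(f,\sigma) \neq \emptyset$, and $H_\sigma$ is a genuine Gröbner basis of $\Syz(I^{[\Sigma]})$ up to signature $\sigma$. By Proposition~\ref{prop:sym-pre} (whose correctness rests on Lemma~\ref{lemma:module-rewriting}), the output $V$ of Algorithm~\ref{algo:sym-pre} satisfies $R_i(f,\sigma) \subseteq \alpha + \myspan_\QQ(V)$ for $i = 0,1$; in particular $(\alpha + \myspan_\QQ(V)) \cap R_1(f,\sigma) \neq \emptyset$. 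Proposition~\ref{prop:redundant} (together with the discussion of~\eqref{eq:span}, which does not change the span) shows that the pruning step preserves this non-emptiness, so after pruning we still have $(\alpha + \myspan_\QQ(V)) \cap R_1(f,\sigma) \neq \emptyset$. Setting $B = \{\mu_1,\dots,\mu_d\} = \supp(V \cup \{\alpha\})$, we get $\alpha + \myspan_\QQ(V) \subseteq \myspan_\QQ(B)$, so $R_1(f,\sigma) \cap \myspan_\QQ(B) \neq \emptyset$: there is an $\ell_1$-minimal cofactor representation of $f$ (among those up to signature $\sigma$) whose support lies in $B$.

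\textbf{From the linear program back to a representation.} It then remains to invoke the correspondence set up in Section~\ref{sec:linear-system}: every solution $\mathbf{y}$ of $A\mathbf{y} = \mathbf{b}$ gives a cofactor representation $\sum_{j=1}^d \mathbf{y}_j \mu_j \in R(f,\sigma)$, and conversely every element of $R(f,\sigma)$ with support in $B$ arises this way, with matching $\ell_1$-norm $\lVert\mathbf{y}\rVert_1 = \lVert \sum_j \mathbf{y}_j \mu_j\rVert_1$. Since $R_1(f,\sigma) \cap \myspan_\QQ(B) \neq \emptyset$, the minimum of $\lVert\mathbf{y}\rVert_1$ over solutions of $A\mathbf{y} = \mathbf{b}$ equals the minimum of $\lVert\cdot\rVert_1$ over all of $R(f,\sigma)$. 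The reformulation of $(P_1)$ as the linear program $(LP)$ with $U = (A \mid -A)$ is standard (and recalled in Section~\ref{sec:linear-system}): from an optimal $\mathbf{v} = (\mathbf{p},\mathbf{q})^T \geq 0$ one recovers $\mathbf{y} = \mathbf{p} - \mathbf{q}$ with $\lVert\mathbf{y}\rVert_1 = \mathbf{c}^T\mathbf{v}$ at the optimum. Therefore $\sum_{i=1}^d (\mathbf{p}_i - \mathbf{q}_i)\mu_i$ is an $\ell_1$-minimal element of $R(f,\sigma)$, i.e.\ an element of $R_1(f,\sigma)$, which is exactly the claimed output.

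The only mildly delicate point — and the one I would state carefully rather than wave through — is the equivalence between $(P_1)$ and $(LP)$: one must check both that every feasible $\mathbf{y}$ lifts to a feasible $\mathbf{v}$ with $\mathbf{c}^T\mathbf{v} = \lVert\mathbf{y}\rVert_1$ (take $\mathbf{p} = \max(\mathbf{y},0)$, $\mathbf{q} = \max(-\mathbf{y},0)$) and that at an optimum no index has both $\mathbf{p}_i > 0$ and $\mathbf{q}_i > 0$ (otherwise subtract the common part and strictly decrease the objective), so that $\lVert\mathbf{p}-\mathbf{q}\rVert_1 = \mathbf{c}^T\mathbf{v}$ there. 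Everything else is bookkeeping on top of Theorem~\ref{thm:syz-basis}, Proposition~\ref{prop:sym-pre}, and Proposition~\ref{prop:redundant}.
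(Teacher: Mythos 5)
Your proof is correct and follows essentially the same approach as the paper: termination from Theorem~\ref{thm:syz-basis} and Proposition~\ref{prop:sym-pre}, correctness by threading the invariant that $R_1(f,\sigma)$ meets the current search space through Proposition~\ref{prop:sym-pre}, Proposition~\ref{prop:redundant}, and the linear-program correspondence of Section~\ref{sec:linear-system}. The paper's own proof is a terse two-line pointer to those sections; you simply expand it, including a useful explicit check of the $(P_1)\leftrightarrow(LP)$ equivalence that the paper delegates to a citation.
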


\begin{proof}
Termination follows from the fact that $H_\sigma$ can be computed in finite time by Theorem~\ref{thm:syz-basis}, and from Proposition~\ref{prop:sym-pre}.
Correctness follows from the discussions in Section~\ref{sec:finding-basis},~\ref{sec:pruning} and~\ref{sec:linear-system}.
\end{proof}

\begin{remark}
  The complexity of Algorithm~\ref{algo:short-rep}, in the worst case, is polynomial in the size of the matrix built, or singly exponential in the size of the input.
  This is the same complexity as one would achieve in Algorithm~\ref{algo:decidability} by using $\ell_{1}$-relaxation and an LP-solver instead of the last step.

  In particular, it is not clear how much the use of signatures and the pruning techniques can reduce the size of the matrix in the worst case, so they do not change the worst-case complexity bounds.
  However, they have a large impact on the complexity in practice, as the experimental results in Section~\ref{sec:experimental-results} demonstrate.
\end{remark}


\begin{remark}\label{remark:weights}
Algorithm~\ref{algo:short-rep} weighs each monomial $\mu_i$ equally by a weight of $1$.
It is also possible to weigh the monomials differently by changing the vector $\mathbf{c}$ so that $\mathbf{c}_i$ encodes the weight of $\mu_i$.
This allows, for example, to weigh monomials by their degree, yielding representations that prefer monomials with small degree.
In this case, the output of the algorithm is no longer guaranteed to be in $R_1(f,\sigma)$, as witnessed by the example \texttt{ROL-3} discussed in Section~\ref{sec:experimental-results}.
\end{remark}

\subsection{Special case: totally unimodular matrices}\label{sec:special-case}

In general, the output of Algorithm~\ref{algo:short-rep} need not be a sparsest representation of $f$ up to signature $\sigma$, \ie, it need not be an element in $R_0(f,\sigma)$.
In this section, we discuss a special case when this is indeed true.
To this end, we consider the linear system $A\mathbf{y} = \mathbf{b}$ constructed in Algorithm~\ref{algo:short-rep}.
We are interested in situations where the augmented matrix $(A \mid \mathbf{b})$ is \emph{totally unimodular} as defined below.

\begin{definition}
A matrix $T \in \{-1,0,1\}^{m \times n}$ is called \emph{totally unimodular} if every square submatrix of $T$ has determinant $0$ or $\pm 1$.
\end{definition}

\begin{theorem}\label{thm:totally-unimodular}
Let $A$ and $\mathbf{b}$ be as constructed in Algorithm~\ref{algo:short-rep}.
If the augmented matrix $(A \mid \mathbf{b})$ is totally unimodular, then 
the output of Algorithm~\ref{algo:short-rep} is an element in $R_0(f,\sigma)$.
\end{theorem}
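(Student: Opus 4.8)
The plan is to exploit the classical fact from integer linear programming that, when the constraint matrix of a linear program is totally unimodular and the right-hand side is integral, every vertex of the feasible polytope is integral, and hence the simplex method (or any vertex-finding solver) returns an integral optimal solution. I would first argue that the linear program $(LP)$ solved in line~\ref{line:linear-program} of Algorithm~\ref{algo:short-rep} has a totally unimodular constraint matrix: its constraint matrix is $U = (A \mid -A)$, and negating columns and duplicating columns preserves total unimodularity, so $U$ is totally unimodular whenever $A$ is. Since $(A \mid \mathbf b)$ is totally unimodular by hypothesis, so is $A$ (a submatrix), hence so is $U$. With $\mathbf b = \mathbf w$ integral, every basic feasible solution $\mathbf v = (\mathbf p, \mathbf q)^T$ has integer coordinates, and therefore the returned representation $\alpha^\ast = \sum_i (\mathbf p_i - \mathbf q_i)\mu_i$ has integer coefficients; in particular $\lVert\alpha^\ast\rVert_1 \ge \lVert\alpha^\ast\rVert_0$, with equality iff all nonzero coefficients are $\pm 1$.

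The second ingredient is that the hypothesis forces every \emph{sparsest} representation in $R_0(f,\sigma)$ to also have coefficients in $\{-1,0,1\}$, so that on the relevant set $\ell_0$ and $\ell_1$ agree. Here I would use the basis $B = \{\mu_1,\dots,\mu_d\}$ from Algorithm~\ref{algo:short-rep}: by the discussion in Sections~\ref{sec:finding-basis}--\ref{sec:pruning} (Proposition~\ref{prop:sym-pre}, Proposition~\ref{prop:redundant}), $R_0(f,\sigma)$ has nonempty intersection with $\alpha + \myspan_\QQ(V) \subseteq \myspan_\QQ(B)$, so there is some sparsest $\beta \in R_0(f,\sigma)$ supported in $B$, corresponding to a solution $\mathbf y$ of $A\mathbf y = \mathbf b$ with $\lVert\mathbf y\rVert_0 = \lVert\beta\rVert_0$ minimal. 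Restricting $A$ to the $\lVert\mathbf y\rVert_0$ columns in the support of $\mathbf y$ gives a system $A'\mathbf y' = \mathbf b$ with $A'$ of full column rank (minimality of the support rules out a nonzero kernel element, else we could zero out a coordinate), hence $\mathbf y'$ is the unique solution and, by Cramer's rule together with total unimodularity of $(A' \mid \mathbf b)$, each coordinate $\mathbf y'_j = \det(A' \text{ with column } j \text{ replaced by } \mathbf b)/\det(\text{a full-rank square submatrix})$ lies in $\{-1,0,1\}$. Thus $\lVert\beta\rVert_1 = \lVert\beta\rVert_0 = \min_{\gamma \in R_0(f,\sigma)}\lVert\gamma\rVert_0$.

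Putting the two together: the algorithm returns $\alpha^\ast$ with $\lVert\alpha^\ast\rVert_1$ minimal over all representations in $\alpha + \myspan_\QQ(V)$, and in particular $\lVert\alpha^\ast\rVert_1 \le \lVert\beta\rVert_1 = \lVert\beta\rVert_0$; combined with $\lVert\alpha^\ast\rVert_0 \le \lVert\alpha^\ast\rVert_1$ (integrality of $\alpha^\ast$) and the fact that $\lVert\beta\rVert_0$ is the minimum $\ell_0$-value over $R_0(f,\sigma) \supseteq R(f,\sigma)$'s minimizers, we get $\lVert\alpha^\ast\rVert_0 \le \lVert\beta\rVert_0$, forcing equality and $\alpha^\ast \in R_0(f,\sigma)$. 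The main obstacle I anticipate is the Cramer's-rule step for the non-square system $A'$: I need a clean argument that a full-column-rank totally unimodular matrix with an integral right-hand side has an integral (indeed $\{-1,0,1\}$-valued) solution. The standard route is to pick a maximal set of independent rows to get a square nonsingular totally unimodular submatrix $\tilde A$ with $\det\tilde A = \pm 1$, solve $\tilde A \mathbf y' = \tilde{\mathbf b}$ by Cramer's rule — each numerator is again the determinant of a $\{-1,0,1\}$-matrix by total unimodularity of $(A'\mid\mathbf b)$, hence in $\{-1,0,1\}$ — and check the dropped rows are automatically satisfied by consistency. I would state this as a short lemma before the main argument to keep the proof of Theorem~\ref{thm:totally-unimodular} streamlined.
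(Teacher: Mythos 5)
Your proof is correct and takes essentially the same route as the paper: the key step in both is the Cramer's rule argument showing that total unimodularity of $(A\mid\mathbf b)$ forces minimal solutions to have coefficients in $\{-1,0,1\}$, so that the $\ell_0$- and $\ell_1$-objectives coincide on the relevant solutions. The paper packages this as a lemma stating that \emph{both} the sparsest and the $\ell_1$-minimal solutions lie in $\{-1,0,1\}^d$, whereas you are slightly more economical on the $\ell_1$ side (you only invoke integrality of the returned vertex, not $\{-1,0,1\}$-valuedness, and let the chain $\lVert\alpha^\ast\rVert_0 \le \lVert\alpha^\ast\rVert_1 \le \lVert\beta\rVert_1 = \lVert\beta\rVert_0$ close the argument). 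Your explicit remark that the returned solution must be a \emph{vertex} of the LP-polytope is a care the paper elides (it simply asserts that any $\ell_1$-minimal solution is basic feasible, which is not literally true when the optimal face is higher-dimensional), so if anything your version is cleaner on that point.
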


In order to prove the theorem, we take a closer look at the coefficients of the sparsest and $\ell_1$-minimal solutions of $A\mathbf{y} = \mathbf{b}$.
It is well-known that totally unimodular coefficient matrices and integer right-hand sides yield integer optima for linear programs~\cite[Cor.~19.1a]{schrijver1998theory}.
The following lemma, extends this statement under slightly stricter assumptions.

\begin{lemma}\label{lemma:sol-P01}
Let the augmented matrix $(A\mid \mathbf{b})$ be totally unimodular.
If $A\mathbf{y} = \mathbf{b}$ is solvable, then any sparsest or $\ell_1$-minimal solution $\mathbf{y}$ satisfies $\mathbf{y} \in \{-1,0,1\}^d$.
\end{lemma}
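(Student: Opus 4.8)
The plan is to show that every sparsest solution and every $\ell_1$-minimal solution of $A\mathbf{y} = \mathbf{b}$ is a vertex of the relevant solution polyhedron, and then to use total unimodularity of $(A \mid \mathbf{b})$ to conclude that such vertices have entries in $\{-1,0,1\}$. First I would treat the $\ell_1$-minimal case. Recall from Section~\ref{sec:linear-system} that an $\ell_1$-minimal solution $\mathbf{y}$ corresponds to an optimal solution $\mathbf{v} = (\mathbf{p},\mathbf{q})$ of the linear program $(LP)$ with $U = (A \mid -A)$, $\mathbf{w} = \mathbf{b}$, $\mathbf{c}^T = (1,\dots,1)$, via $\mathbf{y} = \mathbf{p} - \mathbf{q}$. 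Without loss of generality we may assume $\mathbf{p}$ and $\mathbf{q}$ have disjoint supports (if $\mathbf{p}_i, \mathbf{q}_i > 0$ we can decrease both by $\min(\mathbf{p}_i,\mathbf{q}_i)$ without changing $\mathbf{y}$ and without increasing $\mathbf{c}^T\mathbf{v}$), so $\lVert \mathbf{y}\rVert_1 = \mathbf{c}^T\mathbf{v}$ and an optimal $\mathbf{v}$ can be chosen to be a vertex of the feasible polyhedron $\{\mathbf{v} \geq 0 : U\mathbf{v} = \mathbf{w}\}$.

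Next I would invoke the standard fact~\cite[Thm.~19.1 / Cor.~19.1a]{schrijver1998theory} that if $U$ is totally unimodular and $\mathbf{w}$ is integral, then every vertex of $\{\mathbf{v} \geq 0 : U\mathbf{v} = \mathbf{w}\}$ is integral. The key observation here is that $U = (A \mid -A)$ is totally unimodular whenever $A$ is (negating columns preserves total unimodularity), and $A$ is totally unimodular because $(A \mid \mathbf{b})$ is and submatrices of a totally unimodular matrix are totally unimodular; also $\mathbf{b}$ is integral since it lists coefficients of $f$ over $\QQ$ — wait, here one must be slightly careful: $\mathbf{b}$ need only be a rational vector a priori, but the hypothesis that $(A\mid\mathbf{b})$ has entries in $\{-1,0,1\}$ forces $\mathbf{b} \in \{-1,0,1\}^s$, hence integral. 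So $\mathbf{v}$ is an integer vector; combined with $\mathbf{v} \geq 0$ and the disjoint-support normalization, $\mathbf{y} = \mathbf{p} - \mathbf{q}$ is an integer vector. It remains to bound the entries of $\mathbf{y}$ by $1$ in absolute value. Here I would argue that if some $|\mathbf{y}_i| \geq 2$, then $\mathbf{y}$ together with $\mathbf{b}$ and the relevant columns of $A$ would violate total unimodularity — more precisely, I expect to use that a vertex of $\{\mathbf{v}\ge 0: U\mathbf{v}=\mathbf{w}\}$ with an entry of absolute value $\geq 2$ cannot occur when $U$ is totally unimodular and $\mathbf{w}\in\{-1,0,1\}$: at a vertex the nonzero entries of $\mathbf{v}$ are given by Cramer's rule as ratios of determinants of submatrices of $(U\mid\mathbf{w})$, all of which are $0$ or $\pm 1$, and since the system is nonsingular on the support the denominators are $\pm 1$ and the numerators lie in $\{-1,0,1\}$.

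For the sparsest case, I would argue that a sparsest solution $\mathbf{y}$ (one minimizing $\lVert\cdot\rVert_0$) can also be taken to be a vertex: restricting $A$ to the columns in $\supp(\mathbf{y})$ gives a system $A'\mathbf{y}' = \mathbf{b}$ whose solution set, if it contained a line, would let us zero out a further coordinate contradicting minimality of the support; hence $A'$ has full column rank and $\mathbf{y}'$ is the unique solution of a full-column-rank subsystem, so by Cramer's rule over a totally unimodular $(A'\mid\mathbf{b})$ its entries lie in $\{-1,0,1\}$. The main obstacle I anticipate is the bookkeeping to pass cleanly from ``optimal LP solution'' to ``vertex of the feasible polyhedron'' — one has to rule out, or handle, the case where the optimum is attained only on a positive-dimensional face, which is resolved by the standard fact that a bounded-below LP whose feasible region has no lines attains its optimum at a vertex, together with checking that our feasible region indeed contains no line (it lies in the nonnegative orthant). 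Everything else is a direct application of the total-unimodularity machinery of~\cite{schrijver1998theory}.
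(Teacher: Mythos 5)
Your proposal takes essentially the same route as the paper's proof: for the sparsest case, linear independence of the support columns plus Cramer's rule on a square invertible submatrix of the totally unimodular $(A\mid\mathbf{b})$; for the $\ell_1$ case, pass to the standard-form LP with $U=(A\mid -A)$ and apply the same argument to a basic feasible solution of $(U\mid\mathbf{w})$. A few minor points: the paper completes the Cramer's-rule step by first reducing $A$ to full row rank and extending the support columns to a full invertible $s\times s$ submatrix (your ``full column rank'' $A'$ still needs a square subsystem selected before Cramer applies), and the paper skips the detour through integrality (Cor.~19.1a) and goes directly to the determinant-ratio bound in $\{-1,0,1\}$ — but these are presentation differences, not substantive ones.
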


\begin{proof}
Since removing linearly dependent rows does not change the solution set of a solvable system, we can assume that $A$ has full row rank $s = \rank(A)$.

\emph{Sparsest solution.}
The columns of $A$ corresponding to the nonzero entries of $\mathbf{y}$ have to be linearly independent
(otherwise there would exist a sparser solution).
We can extend them by further columns of $A$ to obtain an invertible $s \times s$ matrix $A'$.
Then $A' \mathbf{y}' = \mathbf{b}$, where $\mathbf{y}'$ contains those coordinates of $\mathbf{y}$ that correspond to the columns of $A$ that are in $A'$. 
By assumption $\det(A') = \pm 1$.
Furthermore, the matrix $A'_i$ obtained by replacing the $i$th column of $A'$ by $\mathbf{b}$ is -- up to permutation of columns -- a submatrix of $(A\mid \mathbf{b})$.
Consequently, $\det(A'_i) \in \{-1,0,1\}$ and applying Cramer's rule shows $\mathbf{y}_i' = \frac{\det(A'_i)}{\det(A')} \in \{-1,0,1\}$.
Then the result follows since any coordinate of $\mathbf{y}$ which does not appear in $\mathbf{y'}$ has to be zero.

\emph{$\ell_1$-minimal solution.}
We consider the equivalent linear program \hyperref[eq:LP]{$(LP$)} and note that $\mathbf{y} \in \{-1,0,1\}^d$ if and only if $\mathbf{v}\in \{0,1\}^{2d}$.
If $\mathbf{v}$ is a solution of \hyperref[eq:LP]{$(LP$)}, then it has to be a basic feasible solution.
This means $\lVert \mathbf{v} \rVert_0 \leq s$ and that the columns of $U$ that correspond to the nonzero coordinates of $\mathbf{v}$ can be extended to an invertible $s \times s$ submatrix $U'$ of $U$.
Since $(U \mid \mathbf{b}) = (A \mid - A \mid \mathbf{b})$ is totally unimodular, the same arguments as in the other case show that $\mathbf{v} \in \{-1,0,1\}^{2d}$,
and the statement follows from the non-negativity constraint of \hyperref[eq:LP]{$(LP$)}.
\end{proof}

Using this lemma, we can now prove Theorem~\ref{thm:totally-unimodular}.

\begin{proof}[Proof of Theorem~\ref{thm:totally-unimodular}]
By construction, the system $A\mathbf{y} = \mathbf{b}$ has a solution.
For $i =0,1$, let $\alpha_i$ be the module element corresponding to an $\ell_i$-minimal solution of the system.
Note that, again by construction, $\alpha_i \in R_i(f,\sigma)$.
By Lemma~\ref{lemma:sol-P01}, $\alpha_i$ contains only nonzero coefficients $\pm 1$, which implies that $\lVert\alpha_i\rVert_0 = \lVert\alpha_i\rVert_1$ for $i = 0,1$, and the result follows.
\end{proof}

In most applications, all polynomials involved are of the form $a - b$ with $a,b\in \<X> \cup \{0\}$ encoding identities of operators of the form $A = B$.
Such polynomials are called \emph{pure difference binomials}.
The following corollary of Theorem~\ref{thm:totally-unimodular} ensures that Algorithm~\ref{algo:short-rep} computes a sparsest representation up to signature $\sigma$ provided that the input polynomials are pure difference binomials.

\begin{corollary}\label{lemma:totally-unimodular}
Let $f, f_1,\dots,f_r\in \QQ\<X>$ be pure difference binomials, $\sigma \in \MFr$, and $\alpha \in R(f,\sigma)$.
Given these elements as input, the output of Algorithm~\ref{algo:short-rep} is an element in $R_0(f,\sigma)$.
\end{corollary}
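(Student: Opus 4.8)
The plan is to deduce Corollary~\ref{lemma:totally-unimodular} from Theorem~\ref{thm:totally-unimodular} by showing that the hypothesis of the latter is automatically satisfied here: namely, that when $f, f_1,\dots,f_r$ are pure difference binomials, the augmented matrix $(A \mid \mathbf{b})$ built in Algorithm~\ref{algo:short-rep} is totally unimodular. Once this is established, Theorem~\ref{thm:totally-unimodular} immediately gives that the output lies in $R_0(f,\sigma)$, and we are done.

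The key observation is to track what kind of matrix entries can possibly arise. First I would note that if $f_i = a_i - b_i$ is a pure difference binomial (with $a_i, b_i \in \<X> \cup \{0\}$), then for any $a, b \in \<X>$ the polynomial $a f_i b = a a_i b - a b_i b$ is again a pure difference binomial: it is either $0$, a single monomial with coefficient $\pm 1$, or a difference of two distinct monomials with coefficients $+1$ and $-1$. In particular every $\overline{\mu_j}$ appearing as a column of $A$ has at most two nonzero entries, each equal to $\pm 1$, and (when there are two) they have opposite signs. The right-hand side $\mathbf{b}$, being the coefficient vector of the pure difference binomial $f$, has the same shape: at most two nonzero entries, $+1$ and $-1$. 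So $(A \mid \mathbf{b})$ is a matrix in $\{-1,0,1\}$ in which every column has at most two nonzero entries summing to zero.

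This is precisely the incidence-matrix structure of a directed graph (or network matrix): such matrices are the classical example of totally unimodular matrices. I would invoke the standard fact (e.g.\ \cite[Thm.~13.9 and surrounding discussion]{schrijver1998theory}, or Poincar\'e's criterion) that any $\{-1,0,1\}$-matrix in which each column has at most one $+1$ and at most one $-1$ is totally unimodular — the usual proof being an induction on the size of a square submatrix, expanding along a column: if the column is zero the determinant is zero; if it has a single nonzero entry we reduce to a smaller submatrix of the same type; and if it has both a $+1$ and a $-1$, then every row of that submatrix also sums to zero within the submatrix (since all other columns already have their $+1$ and $-1$ accounted for, or a column can only contribute zero net), forcing the rows to be linearly dependent and the determinant to vanish. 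Applying this to $(A \mid \mathbf{b})$ yields total unimodularity, and the corollary follows from Theorem~\ref{thm:totally-unimodular}.

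The main obstacle, such as it is, is purely bookkeeping: one must be careful that the claim "each column of $A$ has at most one $+1$ and at most one $-1$" genuinely holds, i.e.\ that no cancellation or coincidence of monomials can produce a coefficient of absolute value $2$. This is clear because $a a_i b$ and $a b_i b$, if both nonzero, are distinct monomials (as $a_i \neq b_i$ implies $a a_i b \neq a b_i b$ by cancellativity of $\<X>$), so $\overline{\mu_j}$ is honestly $\pm(\text{monomial}) \mp (\text{monomial})$ or a single $\pm$ monomial; there is no summing of like terms across the two parts of a single binomial. With that checked, invoking the standard network-matrix total unimodularity result and Theorem~\ref{thm:totally-unimodular} closes the argument.
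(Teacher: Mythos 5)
Your proof is correct and takes essentially the same approach as the paper: extract the at-most-one-$+1$-and-one-$-1$ column structure of $(A\mid\mathbf{b})$ from the pure difference binomial hypothesis, conclude total unimodularity, and invoke Theorem~\ref{thm:totally-unimodular}. The only differences are cosmetic --- the paper spells out the inductive determinant argument inline rather than citing Schrijver, and your sketch of the third inductive case has a small wording slip (it is the \emph{columns} of the square submatrix $U$ that each sum to zero, giving $\mathbf{1}^{T}U = 0$, i.e.\ a linear dependence among the \emph{rows}, not ``every row sums to zero'') which is harmless given that you also appeal to the standard network-matrix result.
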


\begin{proof}
Let $A$, $\mathbf{b}$ as constructed in Algorithm~\ref{algo:short-rep}.
By assumption on $f,f_1,\dots,f_r$, each column of $(A\mid\mathbf{b})$ contains at most one entry $+1$ and at most one entry $-1$ with all other entries being $0$.
Each square submatrix $U$ of $(A\mid\mathbf{b})$ either contains a zero column (then $U$ is singular), 
a column with one nonzero entry (then expansion of $\det(U)$ along this column yields inductively $\det(U) \in \{-1,0,1\}$),
or each column of $U$ contains exactly one entry $+1$ and one entry $-1$ (then  $\mathbf{1}^T U = 0$ showing that $U$ is singular).
Thus, $(A\mid\mathbf{b})$ is totally unimodular and the result follows from Theorem~\ref{thm:totally-unimodular}.
\end{proof}


\begin{example}
We revisit Example~\ref{ex:MP}.
All polynomials that appear in this example are pure difference binomials.
Hence, Corollary~\ref{lemma:totally-unimodular} implies that Algorithm~\ref{algo:short-rep} yields a sparsest cofactor representation up to the used signature bound $\sigma$.
In particular, if a degree-compatible module ordering is used and $\sigma$ is chosen so that $\deg(\sigma) > 7$, then by Corollary~\ref{prop:bound-shortest-repr} the computed representation is a sparsest one (independent of any bound).

Applying Algorithm~\ref{algo:short-rep} to Example~\ref{ex:MP}, with the cofactor representation given in~\eqref{eq:cofactor-repr} and a suitable signature bound $\sigma$,
yields again~\eqref{eq:cofactor-repr}, showing that this is a sparsest cofactor representation.
The basis used to form the linear system only consists of \num{300} elements, compared to the \num{88672} that Algorithm~\ref{algo:decidability} would need.
\end{example}
  
\section{Experimental results}
\label{sec:experimental-results}

We have written a prototype implementation of Algorithm~\ref{algo:short-rep} for \textsc{SageMath}\footnote{\label{foot}Available at \url{https://clemenshofstadler.com/software/}} using
our package \texttt{signature\_gb}\textsuperscript{\ref{foot}} for the signature-based computations and
the \textsc{IBM ILOG CPLEX} optimization studio~\cite{cplex} for the linear programming.

In Table~\ref{table:comparison}, we compare the weight of cofactor representations computed by Algorithm~\ref{algo:short-rep} to those found by other approaches.
In particular, we compare our algorithm to tracing standard Gr\"obner basis computations and reductions, and to tracing reductions done with a signature Gr\"obner basis.

As benchmark examples, we recover short proofs of the following (recent) results in operator theory on the Moore-Penrose inverse. 
\begin{itemize}
	\item \texttt{SVD} encodes~\cite[Ch.~5.7 Fact~4]{hogben2013handbook}, which
	provides a formula for the Moore-Penrose inverse of a matrix in terms of the matrix's singular value decomposition.
	\item \texttt{ROL} encodes the implication $(2) \Rightarrow (1)$ in~\cite[Thm.~3]{koliha2007moore}, which
	provides a sufficient condition for the identity $(AB)^\dagger = B^\dagger A^\dagger$ to hold, where $X^\dagger$ is the Moore-Penrose inverse of an element in a ring with involution.
	\item \texttt{ROL-$n$} encodes the implication $(n) \Rightarrow (1)$ in~\cite[Thm.~2.1]{DD10}.
	This family provides several sufficient conditions for the identity $(AB)^\dagger = B^\dagger A^\dagger$ to hold, where $X^\dagger$ is the Moore-Penrose inverse of a bounded operator on Hilbert spaces.
	\item \texttt{Hartwig-$n$} encodes the implication $(n) \Rightarrow (1)$ in \cite[Thm.~2.3]{cvetkovic2021algebraic}.
	This family provides several sufficient conditions for the identity $(ABC)^\dagger = C^\dagger B^\dagger A^\dagger$ to hold, where $X^\dagger$ is the Moore-Penrose inverse of an element in a ring with involution.
\item \texttt{Ker} encodes part of~\cite[Thm.~1]{RP87},
which characterizes the existence of Moore-Penrose inverses in additive categories with involution in terms of kernels of morphisms.
\item \texttt{SMW} encodes~\cite[Thm.~2.1]{Den11}, which generalizes the Sherman–Morrison–Woodbury formula in terms of the Moore-Penrose inverse.
\item \texttt{Sum} encodes~\cite[Lem.~1]{Yua08}, which provides a sufficient condition for the identity $(A + B)^\dagger = A^\dagger +  B^\dagger$ to hold, where $X^\dagger$ is the Moore-Penrose inverse of an element in a $C^{*}$-algebra.
\end{itemize}

These operator statements are translated into polynomial ideal membership analogous to how Theorem~\ref{thm:MP} and
its proof are translated in Example~\ref{ex:MP}. 
In particular, each identity of operators is translated into a noncommutative polynomial by uniformly replacing each basic operator by a unique noncommutative indeterminate and by forming the difference of the left- and right-hand side of the identity.
With this translation, the assumptions of the operator statement form the generators of the ideal and the claimed identity becomes the polynomial whose ideal membership has to be verified.
We refer to~\cite{raab2021formal} and~\cite[Sec.~5.1]{hofstadler-phd} for a more detailed description of the translation and a theoretical justification of this approach.

For all examples, $\leq_{\textup{deglex}}$ is used in combination with the degree-compatible ordering $\preceq_{\textup{DoPoT}}$ for the signature-based computations. 

The first columns of Table~\ref{table:comparison} contain information about the ideals that arise when translating the operator statements.
In particular, we list the number of generators of each ideal and their maximal degree. 
Moreover, in the column for Algorithm~\ref{algo:short-rep}, we provide information on the used signature bound.
A value $n$ in this column indicates that we consider only cofactor representations of degree $< n$.
The degree bounds were chosen so that the computation would finish for the larger examples \texttt{Hartwig-$n$} and \texttt{ROL-8} within about 90 minutes on a regular laptop
and for the remaining smaller examples within a few minutes.
We note that these degree bounds are strictly smaller than those that Corollary~\ref{prop:bound-shortest-repr} yields, but the latter were computationally infeasible.
Nevertheless, Table~\ref{table:comparison} shows that Algorithm~\ref{algo:short-rep} still allows to find sparser representations for all considered examples.

Apart from the last three (\texttt{Ker}, \texttt{SMW}, \texttt{Sum}), all benchmark examples only consist of pure difference binomials.
For those, Corollary~\ref{lemma:totally-unimodular} implies that the representations computed by Algorithm~\ref{algo:short-rep} are the sparsest up to the respective degree bounds.
For the remaining examples, the algorithm can still be used to find $\ell_{1}$-minimal representations, which are heuristically also sparse, but without guarantee that they are the sparsest.

\begin{table}
  \footnotesize
  \centering
\begin{tabular}{c@{\hskip 7pt}S@{\hskip 3pt}S@{\hskip 3pt}|c@{\hskip 5pt}c@{\hskip -1pt}c@{\hskip 2pt}|S@{\hskip 7pt}S@{\hskip 3pt}S} 
 \toprule
Example & {\#gens} & {deg} & GB & SigGB  & {
    \begin{tabular}[c]{c}
      Algo.~\ref{algo:short-rep} \\ (bound)
    \end{tabular}
    } & {w/o pruning} & {w/ pruning} & {ratio $\neq 0$}\\
 \midrule
\texttt{SVD} &32 &3& 51 & 39 & 25\, (10) &  \qtyproduct{127 x 397}{k} & \qtyproduct{117 x 326}{k} & 0.82 \\
\texttt{ROL} &28 &5& 80 & 39 & 30\, (12)  & \qtyproduct{22 x 102}{k} & \qtyproduct{22 x 55}{k} & 0.54  \\
\texttt{ROL-2} &28 &5& 20 & 21 & 15\, (12) & \qtyproduct{24 x 107}{k} & \qtyproduct{23 x 59}{k} & 0.55 \\
 \texttt{ROL-3} &28 &5& 49 & 44 & 31\, (12) & \qtyproduct{19 x 87}{k} & \qtyproduct{18 x 46}{k} & 0.52 \\
  \texttt{ROL-4} &28&5& 59 & 46 & 33\, (12) & \qtyproduct{68 x 236}{k} & \qtyproduct{64 x 136}{k} & 0.57 \\
 \texttt{ROL-5} & 28 &5& 28 & 30 & 22\, (12)  & \qtyproduct{33 x 134}{k} & \qtyproduct{31 x 79}{k} & 0.59  \\
 \texttt{ROL-6} & 28&5& 39 & 39 & 30\, (12) & \qtyproduct{22 x 99}{k} & \qtyproduct{21 x 54}{k} & 0.54 \\
 \texttt{ROL-7} & 40 &9& 85 & 23 & 17\, (12) & \qtyproduct{18 x 86}{k} & \qtyproduct{17 x 45}{k} & 0.52 \\
 \texttt{ROL-8} &44&7& 241 & 19 & 17\, (12) & \qtyproduct{258 x 965}{k} & \qtyproduct{242 x 548}{k} & 0.57 \\
 \texttt{Hartwig-4} &23&15& 316 & 54 & 46\, (18) &  \qtyproduct{353 x 1756}{k} & \qtyproduct{334 x 1340}{k} & 0.76 \\
  \texttt{Hartwig-5} &26&15& 99 & 43 & 35\, (17)  &\qtyproduct{407 x 1654}{k} & \qtyproduct{392 x 1305}{k} & 0.79 \\
  \texttt{Hartwig-6} &24&15&86 & 33 &  29\, (17) & \qtyproduct{218 x 967}{k} & \qtyproduct{215 x 771}{k} & 0.80 \\
     \hline
     \texttt{Ker} &12 & 3 &49 & 34 &  23\, (12) & \qtyproduct{50 x 142}{k} & \qtyproduct{50 x 129}{k} & 0.90 \\
    \texttt{SMW} &36& 7 & 63 & 42 & 39\, (12) & \qtyproduct{44 x 114}{k} & \qtyproduct{42 x 91}{k} & 0.80 \\
     \texttt{Sum} &20& 3 &313 & 178 &  85\, \phantom{1}(9) & \qtyproduct{11 x 18}{k} & \qtyproduct{10 x 16}{k} & 0.92 \\     \bottomrule
\end{tabular}
\caption{Comparison of weights of cofactor representations computed by standard Gr\"obner bases (GB), by signature Gr\"obner bases (SigGB), and by Algorithm~\ref{algo:short-rep} (Algo.~\ref{algo:short-rep}).
  Size comparison of the coefficient matrix $A$ (rounded to thousands) in Algorithm~\ref{algo:short-rep} with and without applying the pruning techniques from Sec.~\ref{sec:pruning}.
} 
\label{table:comparison}
\end{table}



We also tested an adapted version of Algorithm~\ref{algo:short-rep} as described in Remark~\ref{remark:weights} that minimizes the total number of symbols appearing in a cofactor representation.
For most benchmark examples, the thereby computed representations have the same (minimal) weight as those found with the standard version of the algorithm, but the total number of symbols decreases by up to 15\%.
Only for \texttt{ROL-$3$} does the weight increase by one, while the number of symbols decreases from 196 to 172.
We note that this representation is no longer $\ell_{1}$-minimal.

In the last columns of Table~\ref{table:comparison}, we compare the size of the matrix $A$ constructed in Algorithm~\ref{algo:short-rep}
with and without applying the pruning techniques discussed in Section~\ref{sec:pruning}.
We also list the ratio between the number of nonzero entries in the pruned matrix and the number of nonzero entries in the original matrix.
As the table shows, in some examples the size of the resulting linear system can be reduced drastically, cutting the number of nonzero entries almost in half.

We note that the obtained matrices can still be very large, even after applying the pruning techniques, however they are typically extremely sparse,
which allows to work with them reasonably well.
In particular, if the input only consists of pure difference binomials, then each column contains at most two nonzero entries.
Thus, for example, the largest matrix appearing in our benchmark examples, the one for \texttt{Hartwig-5}, only contains a total of about $2.6$ million nonzero entries, corresponding to a density of nonzero entries of roughly \SI{0.0005}\%.

\section*{Acknowledgements}
C.~Hofstadler was supported by the Austrian Science Fund (FWF) grant P~32301.
T.~Verron was supported by the Austrian Science Fund (FWF) grant P~34872.
We thank Georg Regensburger for valuable remarks which helped to improve the presentation of this work a lot.
We are also grateful to the anonymous referees for their helpful comments and for pointing us to relevant literature.

\bibliographystyle{alpha}
\bibliography{main} 

\end{document}